\definecolor{myred}{rgb}{0.8,0.3,0.2}
\definecolor{myredfill}{rgb}{1,0.93,0.93}
\definecolor{myred2}{rgb}{0.6,0.5,0.3}
\definecolor{myredfill2}{rgb}{0.93,0.83,0.83}
\definecolor{myblue}{rgb}{0.05,0.25,0.7}
\definecolor{mybluefill}{rgb}{0.92,0.98,1}
\definecolor{mygreen}{rgb}{0.3,0.6,0.4}
\definecolor{mygreenfill}{rgb}{0.93,1,0.97}
\definecolor{mygrey}{rgb}{0.4,0.4,0.4}
\definecolor{mygreyfill}{rgb}{0.95,0.95,0.95}
\definecolor{mypurple}{rgb}{0.6,0.2,0.6}
\definecolor{mypurplefill}{rgb}{0.99,0.94,0.99}
\definecolor{myyellow}{rgb}{9,0.9,0.3}
\definecolor{myyellowfill}{rgb}{1,1,0.92}
\definecolor{myorange}{rgb}{9,0.7,0.3}
\definecolor{myorangefill}{rgb}{1,0.95,0.92}
\definecolor{mybrown}{rgb}{0.4,0.25,0.1}
\definecolor{mybrownfill}{rgb}{1,0.9,0.8}
\definecolor{mediumgrey}{rgb}{0.7,0.7,0.7}
\definecolor{linkred}{rgb}{0.6,0.1,0.1}
\definecolor{citeblue}{rgb}{0.2,0.35,0.75}
\definecolor{urlblue}{rgb}{0.2,0.25,0.45}
\setlist[enumerate]{label*=\arabic*.}
\newtheorem{theorem}{Theorem}
\newtheorem{lemma}{Lemma}
\newtheorem*{lemma*}{Lemma}
\newtheorem{proposition}{Proposition}
\begin{document}

\title{\Large{Monotones in Resource Theories for Dynamical Decoupling}}

\author[1]{Graeme D. Berk} 
\author[2,3]{Simon Milz}
\author[4]{Kavan Modi$^*$}

\affil[1]{Nanyang Quantum Hub, School of Physical and Mathematical Sciences, Nanyang Technological University, Singapore 639673.}
\affil[2]{School of Physics, Trinity College Dublin, Dublin 2, Ireland}
\affil[3]{Trinity Quantum Alliance, Unit 16, Trinity Technology and Enterprise Centre, Pearse Street, Dublin 2, D02YN67, Ireland}
\affil[4]{School of Physics and Astronomy, Monash University, Victoria 3800, Australia
\newline {\small *kavan@quantumlah.org}}

\date{\today}

\maketitle

\begin{abstract}
In Ref.~\cite{extractingdynamicalquantumresources} we presented a generalised dynamical resource theory framework that enabled noise reduction techniques including dynamical decoupling (DD) to be studied. While this fundamental contribution remains correct, it has been found that the main resource quantifiers we employed to study these resource theories -- based on the relative entropies between Choi states of multitime processes -- are not monotonic under the allowed transformations. In this letter we detail modified relative entropy-based resource quantifiers, prove that they are indeed monotonic in our resource theories. We re-interpret our numerical results in terms of these new relative entropy monotones, arriving at the same empirical conclusions: DD can be understood as temporal resource distillation, and improvements to noise reduction via our multitimescale optimal dynamical decoupling (MODD) method coincide with a decrease in the corresponding non-Markovianity monotone.
\end{abstract}

\section{Introduction}
Dynamical decoupling (DD) is an open-loop noise reduction method that uses a rapid sequence of unitary interventions to reduce the deleterious influence of a noisy environment on a quantum system~\cite{dynamicaldecouplingofopenquantumsystems}. It has long been understood that the efficacy of DD is linked to the non-Markovianity of the noise process~\cite{nonmarkoviannoisethatcannotbe,dynamicaldecouplingefficiencyversusquantumnonmarkovianity}, however a formal description of this relationship has been elusive. 

In Ref.~\cite{extractingdynamicalquantumresources}, we set out to describe DD in terms of quantum resource theories~\cite{review}. The starting point of Ref.~\cite{extractingdynamicalquantumresources} was a resource theory framework for multitime quantum processes~\cite{resourcetheoriesofmultitime}, using the process tensor formalism~\cite{nonmarkovianquantumprocesses}. In this framework, multitime processes can be represented, using the Choi isomorphism, as multipartite quantum states satisfying a hierarchy of causality trace conditions. Transformations in these resource theories are called superprocesses, and generally correspond to pre- and post-processing at each time. To understand DD as `resource distillation in time', we added a notion of temporal coarse-graining by which the number of temporal subsystems of the Choi state -- physically the number of times the agent is allowed to act at -- is reduced. The resource theory we focused on for DD was independent quantum instruments (IQI), where arbitrary temporally uncorrelated pre- and post-processing, plus temporal coarse-graining are allowed. Within IQI, DD can be understood as applying a resource transformation to a process tensor corresponding to a sequence of unitary rotations, followed by temporal coarse graining, e.g. to a regular two-time quantum channel. If DD is successful, the resultant channel should be less noisy than if no DD took place before the coarse-graining. 

\section{Problem with Choi Divergence Resource Quantifiers}
In Ref.~\cite{extractingdynamicalquantumresources} we mainly chose to measure the efficacy of DD in terms of the relative entropy between the Choi state of a process tensor and its own marginals -- which we shall refer to as Choi divergences. This led to quantifiers of total correlations, Markovian correlations, and non-Markovianity, $I,M,N$, respectively. A resource quantifier is only meaningful if it is a monotone under the free transformations of the theory. For DD in particular, this would correspond to sequences of unitary pulses plus temporal coarse-graining being unable to increase the values of $I,M,N$. We came to the conclusion that this was the case, stated as Thm.~1 in Ref.~\cite{extractingdynamicalquantumresources}, which was proved using Methods Prop.~1 (monotonicity under free superprocesses) and Supplementary Methods Prop.~1 (monotonicity under temporal coarse-graining). However, this relied on faulty implicit assumptions about the trace preservation of our resource transformations (they are trace preserving on process tensors, but not on general quantum states of the same dimensions). In Ref.~\cite{zambon2024processtensordistinguishabilitymeasures} it was shown that $I,M,N$ are not monotones in IQI. Thus, all conclusions that were drawn from the behaviour of $I,M,N$ must be re-examined.

\section{Purpose of this Letter}
In this letter, we provide a viable set of alternative monotones, from which all of our original conclusions about DD can still be drawn. We divide this letter into three parts. Firstly, we show that the notion of irreversibility examined in Ref.~\cite{extractingdynamicalquantumresources} Obs.~(1) allows one to define monotones $\overline{I}_{\hat{m}},\overline{M}_{\hat{m}},\overline{N}_{\hat{m}}$ that are suitable to replace $I,M,N$. We also prove that these new monotones satisfy a subadditivity relation analogous to the strict additivity relation $I=M+N$ (Eq.~(8) in Ref.~\cite{extractingdynamicalquantumresources}). Secondly, we show a correspondence between these monotones, and ones based on generalised comb divergences from Ref.~\cite{zambon2024processtensordistinguishabilitymeasures}. In particular, we show that $\overline{I}_{\hat{m}},\overline{M}_{\hat{m}},\overline{N}_{\hat{m}}$ are members of a family of generalised comb divergences, which in turn provides an operational interpretation to those generalised comb divergences. Finally, we re-examine the results of our numerical work in terms of $\overline{I}_{\hat{m}},\overline{M}_{\hat{m}},\overline{N}_{\hat{m}}$. This analysis allows us to confirm our original conclusions using valid monotones: that DD can be understood as temporal resource distillation using our new monotones, and that improvements to noise reduction via our MODD technique coincide with greater expenditure of the corresponding non-Markovianity monotone.

\section{Irreversibility Monotones}
The resource objects in the resources theories of Ref.~\cite{extractingdynamicalquantumresources} are process tensors~\cite{nonmarkovianquantumprocesses} $\mathbf{T}_{\hat{n}}$, corresponding to non-Markovian quantum processes accessed at some set of times $\hat{n}=\{ t_1,\dots,t_n \}$. These dynamical objects are represented as states using the Choi isomorphism~\cite{quantumstochasticprocesses}. The free transformations in these theories are superprocesses, which are a generalisation of superchannels, and map process tensors to other process tensors. In our notation, a superprocess $\mathbf{Z}_{\hat{n}\hat{m}}$ changes the set of intermediate times from $\hat{n}$ to $\hat{m} \subseteq \hat{n}$. This can be explicitly realised via temporal coarse-graining $ \mathbf{I}_{{\hat{n}}\setminus{\hat{m}}}$, which simply corresponds to `plugging' times in $\hat{n}$ but not in $\hat{m}$ with identity channels. For IQI in particular, the superprocesses correspond to temporally uncorrelated instruments. Thus, monotones in IQI are functions from the set of potential resources to non-negative real numbers that decrease under these transformations. 

We define the Choi divergences $I,M,N$ as in Ref.~\cite{extractingdynamicalquantumresources} Eq.~(7). Consider two distinct marginal processes of a given process $\mathbf{T}_{\hat{n}}$,
\begin{gather} \label{eq:marginals}
     \mathbf{T}^{\text{Mkv}}_{\hat{n}} \! := \! \bigotimes_{j=1}^{n+1}  {\rm tr}_{\bar{j}} \{ \mathbf{T}_{\hat{n}}\}
     \quad \mbox{and} \quad 
     \mathbf{T}^{\text{marg}}_{\hat{n}} \! := \!\!\! \bigotimes_{k=1}^{2(n+1)} \!\! {\rm tr}_{\bar{k}} \{ \mathbf{T}_{\hat{n}}\}.
\end{gather}
The index $j$ enumerates the constituent channels of the process, and $k$ splits this further into the marginal inputs and outputs of those channels. The overline on an index signifies its complement. As such, $\mathbf{T}^{\text{Mkv}}_{\hat{n}}$ only contains the Markovian temporal correlations of $\mathbf{T}_{\hat{n}}$, while  $\mathbf{T}^{\text{marg}}_{\hat{n}}$ contains no temporal correlations whatsoever. These are the nearest Markovian and uncorrelated processes to $\mathbf{T}_{\hat{n}}$, respectively, as measured by quantum relative entropy $S(x\|y) := \mbox{tr}\{x \log(x) - x \log(y)\}$. Given this, the total information $I$, Markovian information $M$, and non-Markovianity $N$ are
\begin{gather}
\begin{split}
    \label{eq:Imonotoneresolution}
    &I(\mathbf{T}_{\hat{n}}) \! := \! S\left(\mathbf{T}_{\hat{n}} \| \mathbf{T}^{\text{marg}}_{\hat{n}} \right), \ \
    M(\mathbf{T}_{\hat{n}}):=S\left( \mathbf{T}^{\text{Mkv}}_{\hat{n}} \| \mathbf{T}^{\text{marg}}_{\hat{n}} \right), \ \
    \mbox{and} \ \ N(\mathbf{T}_{\hat{n}}) \! := \! S\left(\mathbf{T}_{\hat{n}} \| \mathbf{T}^{\text{Mkv}}_{\hat{n}} \right)  .
\end{split}
\end{gather}
In Thm.~1 of Ref.~\cite{extractingdynamicalquantumresources}, based on the incorrect assumption that superprocesses and coarse graining correspond to trace-non-increasing maps, we claimed that $I, M$, and $N$ behaved monotonically under these operations. However, as shown in Ref.~\cite{zambon2024processtensordistinguishabilitymeasures}, this is not the case. 
This can be remedied by invoking the fact that the optimal value of any function -- such as $I,M,N$ -- that can be obtained subject to any allowed transformation must be non-increasing under temporal coarse-graining. Thus, we can replace $I, M$, and $N$ with three families of monotones.
\begin{theorem}[Irreversibility Monotones]
\label{thm::irrev}
Let $\hat{m} \subseteq \hat{n}$. Then,
\begin{gather}
\begin{split}
    &\overline{I}_{\hat{m}}(\mathbf{T}_{\hat{n}}) \! := \! \sup_{\mathbf{Z}_{\hat{n}\hat{n}} \in \mathsf{Z}_{\hat{n}\hat{n}}} I\Big( \llbracket \mathbf{T}_{\hat{n}} | \mathbf{Z}_{\hat{n}\hat{n}} | \mathbf{I}_{{\hat{n}}\setminus{\hat{m}}} \rrbracket  \Big), \ \
    \overline{M}_{\hat{m}}(\mathbf{T}_{\hat{n}}) \! := \! \sup_{\mathbf{Z}_{\hat{n}\hat{n}} \in \mathsf{Z}_{\hat{n}\hat{n}}} M\Big( \llbracket \mathbf{T}_{\hat{n}} | \mathbf{Z}_{\hat{n}\hat{n}} | \mathbf{I}_{{\hat{n}}\setminus{\hat{m}}} \rrbracket  \Big), \\
    &\mbox{\text{\emph{and}}} \ \ \overline{N}_{\hat{m}}(\mathbf{T}_{\hat{n}}) \! := \! \sup_{\mathbf{Z}_{\hat{n}\hat{n}} \in \mathsf{Z}_{\hat{n}\hat{n}}} N\Big( \llbracket \mathbf{T}_{\hat{n}} | \mathbf{Z}_{\hat{n}\hat{n}} | \mathbf{I}_{{\hat{n}}\setminus{\hat{m}}} \rrbracket  \Big),
\end{split}
\end{gather}
are monotones under free transformations in $\mathsf{Z}_{\hat{n}\hat{n}}$, where $\mathbf{I}_{{\hat{n}}\setminus{\hat{m}}}$ is the temporal coarse-graining transformation from times $\hat{n}$ to times $\hat{m}$.
\end{theorem}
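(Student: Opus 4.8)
The plan is to reduce the statement to a single structural fact — that every free transformation from $\hat n$ down to $\hat m$ factors as a time-preserving free superprocess followed by the canonical coarse-graining $\mathbf{I}_{\hat{n}\setminus\hat{m}}$ — after which monotonicity becomes the elementary observation that anything one may do after a free transformation one could have done before it. Concretely, I would first record two facts about the theory $\mathsf{Z}$. (a) \emph{Composability}: free transformations are closed under the link-product composition $\circ$, and $\llbracket \mathbf{T}_{\hat{n}} \,|\, \mathbf{Z}_{\hat{n}\hat{n}'} \,|\, \mathbf{Z}'_{\hat{n}'\hat{m}} \rrbracket = \llbracket \mathbf{T}_{\hat{n}} \,|\, \mathbf{Z}_{\hat{n}\hat{n}'} \circ \mathbf{Z}'_{\hat{n}'\hat{m}} \rrbracket$; in particular $\mathbf{I}_{\hat{n}\setminus\hat{m}} = \mathbf{I}_{\hat{n}\setminus\hat{n}'}\circ\mathbf{I}_{\hat{n}'\setminus\hat{m}}$ for $\hat{m}\subseteq\hat{n}'\subseteq\hat{n}$, and $\mathbf{Z}_{\hat{n}\hat{n}}\circ\mathbf{I}_{\hat{n}\setminus\hat{m}} \in \mathsf{Z}_{\hat{n}\hat{m}}$ for every time-preserving $\mathbf{Z}_{\hat{n}\hat{n}} \in \mathsf{Z}_{\hat{n}\hat{n}}$. (b) \emph{Factorisation}: conversely, every $\mathbf{S}_{\hat{n}\hat{m}} \in \mathsf{Z}_{\hat{n}\hat{m}}$ equals $\mathbf{Z}_{\hat{n}\hat{n}}\circ\mathbf{I}_{\hat{n}\setminus\hat{m}}$ for some $\mathbf{Z}_{\hat{n}\hat{n}} \in \mathsf{Z}_{\hat{n}\hat{n}}$. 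For IQI, (b) holds because a free transformation acts by temporally uncorrelated pre/post-processing together with plugging channels at the removed times, and plugging an arbitrary channel $\Lambda_t$ at a removed time $t$ can be realised as the pre-processing $\Lambda_t$ composed with the identity-plug already supplied by $\mathbf{I}_{\hat{n}\setminus\hat{m}}$.

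Combining (a) and (b), the image $\{\,\llbracket \mathbf{T}_{\hat{n}} \,|\, \mathbf{Z}_{\hat{n}\hat{n}} \,|\, \mathbf{I}_{\hat{n}\setminus\hat{m}} \rrbracket : \mathbf{Z}_{\hat{n}\hat{n}} \in \mathsf{Z}_{\hat{n}\hat{n}}\,\}$ is precisely the set of process tensors on $\hat m$ reachable from $\mathbf{T}_{\hat{n}}$ by a free transformation, so $\overline{I}_{\hat{m}}(\mathbf{T}_{\hat{n}}) = \sup\{\, I(\llbracket \mathbf{T}_{\hat{n}} \,|\, \mathbf{S}_{\hat{n}\hat{m}} \rrbracket) : \mathbf{S}_{\hat{n}\hat{m}} \in \mathsf{Z}_{\hat{n}\hat{m}}\,\}$, and likewise for $\overline{M}_{\hat{m}}$ and $\overline{N}_{\hat{m}}$. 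Now take any free transformation $\mathbf{S}_{\hat{n}\hat{n}'}$ with $\hat{m}\subseteq\hat{n}'\subseteq\hat{n}$ — the only regime in which $\overline{I}_{\hat{m}}$ is defined on both $\mathbf{T}_{\hat{n}}$ and $\mathbf{T}'_{\hat{n}'} := \llbracket \mathbf{T}_{\hat{n}} \,|\, \mathbf{S}_{\hat{n}\hat{n}'} \rrbracket$. Every process tensor on $\hat m$ reachable from $\mathbf{T}'_{\hat{n}'}$ has the form $\llbracket \mathbf{T}'_{\hat{n}'} \,|\, \mathbf{S}'_{\hat{n}'\hat{m}} \rrbracket = \llbracket \mathbf{T}_{\hat{n}} \,|\, \mathbf{S}_{\hat{n}\hat{n}'}\circ\mathbf{S}'_{\hat{n}'\hat{m}} \rrbracket$ with $\mathbf{S}_{\hat{n}\hat{n}'}\circ\mathbf{S}'_{\hat{n}'\hat{m}} \in \mathsf{Z}_{\hat{n}\hat{m}}$ by (a), hence is reachable from $\mathbf{T}_{\hat{n}}$ as well; therefore the supremum defining $\overline{I}_{\hat{m}}(\mathbf{T}'_{\hat{n}'})$ is taken over a subset of the one defining $\overline{I}_{\hat{m}}(\mathbf{T}_{\hat{n}})$, and $\overline{I}_{\hat{m}}(\mathbf{T}'_{\hat{n}'}) \le \overline{I}_{\hat{m}}(\mathbf{T}_{\hat{n}})$. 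The argument is unchanged for $\overline{M}_{\hat{m}}$ and $\overline{N}_{\hat{m}}$, as the only property of $I,M,N$ it uses is that they are fixed functions of a process tensor; non-negativity is inherited from that of the relative entropy $S(\cdot\|\cdot)$, and each supremum is well-defined since it runs over the non-empty set $\mathsf{Z}_{\hat{n}\hat{n}}$, which contains the identity superprocess.

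The step I expect to carry the weight is the factorisation (b): that every coarse-graining free transformation decomposes as a \emph{time-preserving} free superprocess followed by the canonical map $\mathbf{I}_{\hat{n}\setminus\hat{m}}$ while remaining inside $\mathsf{Z}_{\hat{n}\hat{n}}$. This is exactly the structural point whose earlier, implicit, trace-based handling in Ref.~\cite{extractingdynamicalquantumresources} went wrong, so the verification has to be done with care: one must check that re-expressing a plugged channel as pre-processing-then-identity never leaves the free set $\mathsf{Z}_{\hat{n}\hat{n}}$ and that the resulting object still obeys the causality/trace conditions defining a process tensor. Everything else — composability, the nesting of coarse-grainings, and the ``supremum over a larger set'' comparison — is routine.
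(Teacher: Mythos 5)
Your argument is correct, and its core is the same as the paper's: the supremum defining $\overline{I}_{\hat{m}}$ can only shrink when a further free transformation is absorbed into it. The difference lies in how the two halves of the claim are organised. The paper splits the proof into (i) time-preserving superprocesses, handled by the observation that composing the optimal $\mathbf{Z}^*_{\hat{n}\hat{n}}$ with any other $\mathbf{Z}_{\hat{n}\hat{n}}$ cannot beat $\mathbf{Z}^*_{\hat{n}\hat{n}}$ alone, and (ii) temporal coarse-graining, which it delegates to Corollary~1 of Ref.~\cite{extractingdynamicalquantumresources}. You instead give a single self-contained argument resting on two structural properties of the free set --- closure under composition and the factorisation of every $\mathbf{S}_{\hat{n}\hat{m}} \in \mathsf{Z}_{\hat{n}\hat{m}}$ as $\mathbf{Z}_{\hat{n}\hat{n}}\circ\mathbf{I}_{\hat{n}\setminus\hat{m}}$ --- so that both cases reduce to a subset comparison of reachable processes. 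These are precisely the \emph{compatibility} conditions ($\mathsf{Z}_{\hat{n}\hat{m}} = \llbracket \mathsf{Z}_{\hat{n}\hat{n}}|\mathbf{I}_{\hat{n}\setminus\hat{m}}\rrbracket$ and $\mathsf{Z}_{\hat{n}\hat{m}}\mathsf{Z}_{\hat{m}\hat{\ell}}\subseteq\mathsf{Z}_{\hat{n}\hat{\ell}}$) that the paper only makes explicit later, when proving monotonicity of the reachable comb divergence; you have in effect pulled that structure forward and used it to make Theorem~1 self-contained. You are also right to flag the factorisation step as the one needing care --- it is exactly where the original trace-based argument went wrong --- and your justification for IQI (a plugged channel is pre-processing followed by the identity plug, which stays within the temporally uncorrelated free set) is the correct resolution. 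The paper's version buys brevity at the cost of an external citation; yours buys a uniform treatment that makes visible which properties of $\mathsf{Z}$ the theorem actually depends on.
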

Thm.~\ref{thm::irrev} holds true for \textit{any} set of free operations $\mathsf{Z}_{\hat{n}\hat{n}}$, in particular for the case where $\mathsf{Z}_{\hat{n}\hat{n}}$ is the set of uncorrelated sequences of channels and temporal coarse graining, which corresponds to IQI, the resource theory of temporal resolution considered in Ref.~\cite{extractingdynamicalquantumresources}.
\begin{proof}
The composition of the optimal superprocess $\mathbf{Z}^{*}_{\hat{n}\hat{n}}$ with some other arbitrary one $\mathbf{Z}_{\hat{n}\hat{n}}$ will never be more optimal than $\mathbf{Z}^{*}_{\hat{n}\hat{n}}$ alone, implying that these are monotonic under superprocesses that do not change the number of steps. Corollary 1 of Ref.~\cite{extractingdynamicalquantumresources} implies that $\overline{I}_{\hat{m}},\overline{M}_{\hat{m}},\overline{N}_{\hat{m}}$ are all monotonic under temporal coarse-graining. Hence, $\overline{I}_{\hat{m}},\overline{M}_{\hat{m}},\overline{N}_{\hat{m}}$ are all monotones in IQI.
\end{proof}
These monotones have the physical interpretation of being the highest value one can obtain for $I,M,N$ at temporal resolution $\hat{m}$, given process $\mathbf{T}_{\hat{n}}$ with temporal resolution $\hat{n}$. For example, when $\hat{m}=\emptyset$, i.e. coarse-graining to a channel, $\overline{I}_{\hat{m}}(\mathbf{T}_{\hat{n}})$ is the highest one can make the mutual information of that channel under any DD sequence. While $\overline{I}_{\hat{m}},\overline{M}_{\hat{m}},\overline{N}_{\hat{m}}$ are not as straightforward to compute as $I,M,N$, our multitime optimal dynamical decoupling (MODD) algorithm intrinsically approximates $\overline{I}_{\emptyset}$, and could be adapted to approximate $\overline{I}_{\hat{m}}$, $\overline{M}_{\hat{m}}$, or $\overline{N}_{\hat{m}}$ more generally.

In Eq.~(8) of Ref.~\cite{extractingdynamicalquantumresources}, we related $I,M,N$ by
\begin{equation}
    I=M+N,
\end{equation}
which implies that Markovian and non-Markovian contributions to the total temporal correlations necessarily come at the cost of each other. The same statement is not true in general for $\overline{I}_{\hat{m}},\overline{M}_{\hat{m}},\overline{N}_{\hat{m}}$. However, the superprocess $\mathbf{Z}^{M}_{\hat{n}\hat{n}}$ optimising $\overline{M}_{\hat{m}}$ will in general not be the same as the superprocess $\mathbf{Z}^{N}_{\hat{n}\hat{n}}$ optimising $\overline{N}_{\hat{m}}$. This leads to our next theorem.
\begin{theorem}[Subadditivity of $\overline{M}$ and $\overline{N}$] \label{thm:subadditivity}
For any process $\mathbf{T}_{\hat{n}}$, the monotones $\overline{I}_{\hat{m}},\overline{M}_{\hat{m}},\overline{N}_{\hat{m}}$ satisfy
    \begin{gather}
\overline{I}_{\hat{m}}(\mathbf{T}_{\hat{n}}) \leq    \overline{M}_{\hat{m}}(\mathbf{T}_{\hat{n}}) + \overline{N}_{\hat{m}}(\mathbf{T}_{\hat{n}}).     
\end{gather}
\end{theorem}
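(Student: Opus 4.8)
The plan is to exploit the pointwise additivity $I = M + N$ (Eq.~(8) of Ref.~\cite{extractingdynamicalquantumresources}), which holds for \emph{every} process tensor. The crucial observation is that for any $\mathbf{Z}_{\hat{n}\hat{n}} \in \mathsf{Z}_{\hat{n}\hat{n}}$, the coarse-grained object $\llbracket \mathbf{T}_{\hat{n}} | \mathbf{Z}_{\hat{n}\hat{n}} | \mathbf{I}_{\hat{n}\setminus\hat{m}} \rrbracket$ is itself a bona fide process tensor at temporal resolution $\hat{m}$, so the identity $I = M + N$ applies to it verbatim.

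First I would fix $\varepsilon > 0$ and choose a superprocess $\mathbf{Z}^{\varepsilon}_{\hat{n}\hat{n}} \in \mathsf{Z}_{\hat{n}\hat{n}}$ that is $\varepsilon$-optimal for $\overline{I}_{\hat{m}}$, i.e.\ $I\big( \llbracket \mathbf{T}_{\hat{n}} | \mathbf{Z}^{\varepsilon}_{\hat{n}\hat{n}} | \mathbf{I}_{\hat{n}\setminus\hat{m}} \rrbracket \big) \geq \overline{I}_{\hat{m}}(\mathbf{T}_{\hat{n}}) - \varepsilon$. (Alternatively, since $\mathsf{Z}_{\hat{n}\hat{n}}$ is a closed and bounded, hence compact, subset of a finite-dimensional space and $I$ is continuous where finite, the supremum is attained and one may argue with an exact maximiser $\mathbf{Z}^{*}_{\hat{n}\hat{n}}$ and $\varepsilon=0$.) Applying $I = M + N$ to the process $\llbracket \mathbf{T}_{\hat{n}} | \mathbf{Z}^{\varepsilon}_{\hat{n}\hat{n}} | \mathbf{I}_{\hat{n}\setminus\hat{m}} \rrbracket$ yields
\[
\overline{I}_{\hat{m}}(\mathbf{T}_{\hat{n}}) - \varepsilon \;\leq\; M\big( \llbracket \mathbf{T}_{\hat{n}} | \mathbf{Z}^{\varepsilon}_{\hat{n}\hat{n}} | \mathbf{I}_{\hat{n}\setminus\hat{m}} \rrbracket \big) + N\big( \llbracket \mathbf{T}_{\hat{n}} | \mathbf{Z}^{\varepsilon}_{\hat{n}\hat{n}} | \mathbf{I}_{\hat{n}\setminus\hat{m}} \rrbracket \big).
\]
Since $\mathbf{Z}^{\varepsilon}_{\hat{n}\hat{n}}$ is merely one feasible element of $\mathsf{Z}_{\hat{n}\hat{n}}$, each summand on the right is bounded above by the corresponding supremum, namely $\overline{M}_{\hat{m}}(\mathbf{T}_{\hat{n}})$ and $\overline{N}_{\hat{m}}(\mathbf{T}_{\hat{n}})$ respectively. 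Letting $\varepsilon \to 0$ then gives $\overline{I}_{\hat{m}}(\mathbf{T}_{\hat{n}}) \leq \overline{M}_{\hat{m}}(\mathbf{T}_{\hat{n}}) + \overline{N}_{\hat{m}}(\mathbf{T}_{\hat{n}})$.

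I do not expect a substantive obstacle here; the argument is a one-line max--of--sum versus sum--of--maxes estimate. The only points requiring care are (i) confirming that $I=M+N$ is legitimately invoked at resolution $\hat{m}$, on the coarse-grained process, rather than at resolution $\hat{n}$, and (ii) noting the mild support/finiteness caveat so that the relative-entropy manipulations are well defined. It is also worth remarking explicitly why equality cannot be expected in general: the superprocess $\mathbf{Z}^{\varepsilon}_{\hat{n}\hat{n}}$ that (nearly) maximises $I$ need not simultaneously maximise $M$ and $N$, so the three optimisations decouple and the bound is genuinely one-sided — this is exactly the analogue, at the level of irreversibility monotones, of the strict identity $I=M+N$ being weakened to an inequality.
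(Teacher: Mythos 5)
Your proposal is correct and follows essentially the same route as the paper: both evaluate the quantities at the (near-)optimiser for $\overline{I}_{\hat{m}}$, invoke the pointwise identity $I=M+N$ on the coarse-grained process, and bound the resulting terms by the corresponding suprema $\overline{M}_{\hat{m}}$ and $\overline{N}_{\hat{m}}$. The only cosmetic difference is that the paper re-derives $I-M=N$ at the optimiser by explicitly expanding the relative entropies into von Neumann entropies, rather than citing the additivity relation directly.
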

\begin{proof}
Consider the difference
 \begin{align}
     &\overline{I}_{\hat{m}}(\mathbf{T}_{\hat{n}}) - \overline{M}_{\hat{m}}(\mathbf{T}_{\hat{n}}) \\
     &= \sup_{\mathbf{Z}_{\hat{n}\hat{n}} \in \mathsf{Z}_{\hat{n}\hat{n}}} I\Big( \llbracket \mathbf{T}_{\hat{n}} | \mathbf{Z}_{\hat{n}\hat{n}} | \mathbf{I}_{{\hat{n} \setminus \hat{m}}} \rrbracket \Big) - \sup_{\mathbf{Z}'_{\hat{n}\hat{n}} \in \mathsf{Z}_{\hat{n}\hat{n}}} M\Big( \llbracket \mathbf{T}_{\hat{n}} | \mathbf{Z}'_{\hat{n}\hat{n}}  | \mathbf{I}_{{\hat{n} \setminus \hat{m}}} \rrbracket\Big) \\
     \label{eqn::subadditivity}
     &= \sup_{\mathbf{Z}_{\hat{n}\hat{n}} \in \mathsf{Z}_{\hat{n}\hat{n}}} S\left(\llbracket \mathbf{T}_{\hat{n}} | \mathbf{Z}_{\hat{n}\hat{n}} | \mathbf{I}_{{\hat{n} \setminus \hat{m}}} \rrbracket \| \llbracket \mathbf{T}_{\hat{n}} | \mathbf{Z}_{\hat{n}\hat{n}} | \mathbf{I}_{{\hat{n} \setminus \hat{m}}} \rrbracket^{\text{marg}}\right)  - 
     \sup_{\mathbf{Z}'_{\hat{n}\hat{n}} \in \mathsf{Z}_{\hat{n}\hat{n}}} S\left(\llbracket \mathbf{T}_{\hat{n}}| \mathbf{Z}'_{\hat{n}\hat{n}} | \mathbf{I}_{{\hat{n} \setminus \hat{m}}} \rrbracket^{\text{Mkv}} \|  \llbracket \mathbf{T}_{\hat{n}} | \mathbf{Z}'_{\hat{n}\hat{n}} | \mathbf{I}_{{\hat{n} \setminus \hat{m}}} \rrbracket^{\text{marg}} \right).
 \end{align}
 Assuming that $\mathsf{Z}_{\hat{n}\hat{n}}$ is compact (which holds for all sets $\mathsf{Z}_{\hat{n}\hat{n}}$ we consider), we denote by $\mathbf{Z}_{\hat{n}\hat{n}}^*$ the superprocess that maximises the first term in Eq.~\eqref{eqn::subadditivity}. Then we obtain
 \begin{align}
     &\overline{I}_{\hat{m}}(\mathbf{T}_{\hat{n}}) - \overline{M}_{\hat{m}}(\mathbf{T}_{\hat{n}}) \\
     \leq& S\left(\llbracket \mathbf{T}_{\hat{n}} | \mathbf{Z}^*_{\hat{n}\hat{n}} | \mathbf{I}_{{\hat{n} \setminus \hat{m}}} \rrbracket \| \llbracket \mathbf{T}_{\hat{n}} | \mathbf{Z}^*_{\hat{n}\hat{n}} | \mathbf{I}_{{\hat{n} \setminus \hat{m}}} \rrbracket^{\text{marg}}\right)  - 
     S\left(\llbracket \mathbf{T}_{\hat{n}}| \mathbf{Z}^*_{\hat{n}\hat{n}} | \mathbf{I}_{{\hat{n} \setminus \hat{m}}} \rrbracket^{\text{Mkv}} \|  \llbracket \mathbf{T}_{\hat{n}} | \mathbf{Z}^*_{\hat{n}\hat{n}} | \mathbf{I}_{{\hat{n} \setminus \hat{m}}} \rrbracket^{\text{marg}} \right) \\
     =& -S\left(\llbracket \mathbf{T}_{\hat{n}} | \mathbf{Z}^*_{\hat{n}\hat{n}} | \mathbf{I}_{{\hat{n} \setminus \hat{m}}} \rrbracket\right) + S\left(\llbracket \mathbf{T}_{\hat{n}} | \mathbf{Z}^*_{\hat{n}\hat{n}} | \mathbf{I}_{{\hat{n} \setminus \hat{m}}} \rrbracket^{\text{marg}} \right) + S\left(\llbracket \mathbf{T}_{\hat{n}} | \mathbf{Z}^*_{\hat{n}\hat{n}} | \mathbf{I}_{{\hat{n} \setminus \hat{m}}} \rrbracket^{\text{Mkv}} \right) - S\left(\llbracket \mathbf{T}_{\hat{n}} | \mathbf{Z}^*_{\hat{n}\hat{n}} | \mathbf{I}_{{\hat{n} \setminus \hat{m}}} \rrbracket^{\text{marg}} \right) \\
     =& S\left(\llbracket \mathbf{T}_{\hat{n}} | \mathbf{Z}^*_{\hat{n}\hat{n}} | \mathbf{I}_{{\hat{n} \setminus \hat{m}}} \rrbracket^{\text{Mkv}} \right) -S\left(\llbracket \mathbf{T}_{\hat{n}} | \mathbf{Z}^*_{\hat{n}\hat{n}} | \mathbf{I}_{{\hat{n} \setminus \hat{m}}} \rrbracket \right)\\
     \leq & \sup_{\mathbf{Z}_{\hat{n}\hat{n}} \in \mathsf{Z}_{\hat{n}\hat{n}}} \left[S\left(\llbracket \mathbf{T}_{\hat{n}} | \mathbf{Z}_{\hat{n}\hat{n}} | \mathbf{I}_{{\hat{n} \setminus \hat{m}}} \rrbracket^{\text{Mkv}} \right) -S\left(\llbracket \mathbf{T}_{\hat{n}} | \mathbf{Z}_{\hat{n}\hat{n}} | \mathbf{I}_{{\hat{n} \setminus \hat{m}}} \rrbracket \right) \right] \\
     =& \sup_{\mathbf{Z}_{\hat{n}\hat{n}} \in \mathsf{Z}_{\hat{n}\hat{n}}} \left[S\left(\ \llbracket \mathbf{T}_{\hat{n}} | \mathbf{Z}_{\hat{n}\hat{n}} | \mathbf{I}_{{\hat{n} \setminus \hat{m}}} \rrbracket  \| \llbracket \mathbf{T}_{\hat{n}} | \mathbf{Z}_{\hat{n}\hat{n}} | \mathbf{I}_{{\hat{n} \setminus \hat{m}}} \rrbracket^{\text{Mkv}} \right)\right]\\
     =& \overline{N}_{\hat{m}}(\mathbf{T}_{\hat{n}}).
 \end{align}
The third line comes from explicitly writing out the relative entropy between a quantum state and its own marginals. The second inequality holds because the value any arbitrary $\mathbf{Z}_{\hat{n}\hat{n}} \in \mathsf{Z}_{\hat{n}\hat{n}}$ cannot be greater than the supremum over $ \mathsf{Z}_{\hat{n}\hat{n}}$.
\end{proof}

In Thms.~{1-4} in the Supplementary Discussion of Ref.~\cite{extractingdynamicalquantumresources} the sequential/parallel composition of free resources is studied in terms of the trace distance and relative entropy between Choi states. While these supplementary results remain meaningful in their own right for understanding changes in the distinguishability of multitime processes under composition with free processes, they cannot be understood as pertaining to the behaviour of the resource monotones $\overline{I}_{\hat{m}},\overline{M}_{\hat{m}},$ and $\overline{N}_{\hat{m}}$.

Here, we generalise these results to the monotone $\overline{I}$, i.e., we show that $\overline{I}$ remains invariant under sequential and parallel composition with free processes. Similar arguments can be applied to $\overline{M}$ and $\overline{N}$. The trade-off between total correlation, non-Markovianity, and Markovian correlation is rather nontrivial, see Ref.~\cite{Zambon_2024} for a detailed discussion.

First, we consider sequential composition of a process $\mathbf{T}_{\hat{n}}$ on $n$ times with a free process $\mathbf{S}_{\hat{n}'}$ on $n'$ times. We emphasise that this parallel composition leads to a `new' time slot in-between the two processes, which we denote by $t'$, such that the resulting monotone is of the form $\overline{I}_{(\hat{m}, t' , \hat{m}')}$.
\begin{proposition}[Sequential Composition]
    $\overline{I}_{\hat{m}}$ is invariant under parallel composition with free processes in IQI.
\end{proposition}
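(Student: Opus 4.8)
The plan is to prove the stated invariance as the two-sided bound $\overline{I}_{(\hat{m},\hat{m}')}(\mathbf{T}_{\hat{n}}\otimes\mathbf{S}_{\hat{n}'})=\overline{I}_{\hat{m}}(\mathbf{T}_{\hat{n}})$, valid for any process $\mathbf{S}_{\hat{n}'}$ that is free in the sense $\overline{I}_{\hat{m}'}(\mathbf{S}_{\hat{n}'})=0$. The two facts I would lean on are: (i) the total correlation $I$ is additive under parallel (tensor) composition, $I(\rho_A\otimes\rho_B)=I(\rho_A)+I(\rho_B)$, which follows from additivity of the relative entropy under tensor products together with the observation that the full marginalisation of Eq.~\eqref{eq:marginals} distributes over a tensor product, $(\rho_A\otimes\rho_B)^{\text{marg}}=\rho_A^{\text{marg}}\otimes\rho_B^{\text{marg}}$; and (ii) that the independent-instrument superprocesses of IQI act separately on the two parallel factors, so that an admissible superprocess on $\mathbf{T}_{\hat{n}}\otimes\mathbf{S}_{\hat{n}'}$ may be taken of the form $\mathbf{Z}_{\hat{n}\hat{n}}\otimes\mathbf{Z}'_{\hat{n}'\hat{n}'}$.

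For the lower bound I would feed the supremum-attaining superprocess $\mathbf{Z}^{*}_{\hat{n}\hat{n}}$ for $\overline{I}_{\hat{m}}(\mathbf{T}_{\hat{n}})$ into the $\mathbf{T}$ factor and the trivial operation into the $\mathbf{S}$ factor. The joint coarse-grained output is then a product of $\llbracket\mathbf{T}_{\hat{n}}|\mathbf{Z}^{*}_{\hat{n}\hat{n}}|\mathbf{I}_{\hat{n}\setminus\hat{m}}\rrbracket$ with the corresponding $\mathbf{S}$-side output, whose total correlation is non-negative. Additivity of $I$ therefore gives a total correlation of at least $I(\llbracket\mathbf{T}_{\hat{n}}|\mathbf{Z}^{*}_{\hat{n}\hat{n}}|\mathbf{I}_{\hat{n}\setminus\hat{m}}\rrbracket)=\overline{I}_{\hat{m}}(\mathbf{T}_{\hat{n}})$. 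Since the supremum defining $\overline{I}_{(\hat{m},\hat{m}')}(\mathbf{T}_{\hat{n}}\otimes\mathbf{S}_{\hat{n}'})$ ranges over all admissible superprocesses, this single choice already certifies $\overline{I}_{(\hat{m},\hat{m}')}(\mathbf{T}_{\hat{n}}\otimes\mathbf{S}_{\hat{n}'})\geq\overline{I}_{\hat{m}}(\mathbf{T}_{\hat{n}})$.

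For the upper bound I would argue that an arbitrary IQI superprocess on $\mathbf{T}_{\hat{n}}\otimes\mathbf{S}_{\hat{n}'}$ factorises as $\mathbf{Z}_{\hat{n}\hat{n}}\otimes\mathbf{Z}'_{\hat{n}'\hat{n}'}$, since the independent instruments comprising a free superprocess are uncorrelated across the two parallel wires. Writing $\mathbf{R}_T:=\llbracket\mathbf{T}_{\hat{n}}|\mathbf{Z}_{\hat{n}\hat{n}}|\mathbf{I}_{\hat{n}\setminus\hat{m}}\rrbracket$ and $\mathbf{R}_S:=\llbracket\mathbf{S}_{\hat{n}'}|\mathbf{Z}'_{\hat{n}'\hat{n}'}|\mathbf{I}_{\hat{n}'\setminus\hat{m}'}\rrbracket$ for the two coarse-grained outputs, the joint output is the product $\mathbf{R}_T\otimes\mathbf{R}_S$, and additivity of the total correlation gives
\[
I(\mathbf{R}_T\otimes\mathbf{R}_S)=I(\mathbf{R}_T)+I(\mathbf{R}_S).
\]
Maximising the two terms independently bounds the left-hand side by $\overline{I}_{\hat{m}}(\mathbf{T}_{\hat{n}})+\overline{I}_{\hat{m}'}(\mathbf{S}_{\hat{n}'})=\overline{I}_{\hat{m}}(\mathbf{T}_{\hat{n}})$, where the final equality uses the freeness $\overline{I}_{\hat{m}'}(\mathbf{S}_{\hat{n}'})=0$. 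Combining the two bounds yields the asserted equality.

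The main obstacle is the factorisation step in the upper bound: I must verify that no admissible IQI superprocess on the joint system can correlate the $\mathbf{T}$ and $\mathbf{S}$ wires in a way that inflates $I$ beyond the product strategy. This hinges on the precise definition of the free set $\mathsf{Z}$ under parallel composition, namely that independent quantum instruments remain tensor-product across the parallel cut; were joint operations across the cut admissible, one would instead need a data-processing/subadditivity argument showing that discarding the $\mathbf{S}$ register cannot decrease $I$ by more than the vanishing free contribution. A secondary point requiring care is the normalisation of the Choi states, so that the additivity of the relative entropy and the distributivity of the marginalisation map hold without stray scalar factors; this and the manipulation of the suprema are otherwise routine bookkeeping.
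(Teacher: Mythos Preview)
Your lower bound matches the paper's argument and is fine. The genuine gap is in your upper bound: the factorisation step ``an arbitrary IQI superprocess on $\mathbf{T}_{\hat{n}}\otimes\mathbf{S}_{\hat{n}'}$ factorises as $\mathbf{Z}_{\hat{n}\hat{n}}\otimes\mathbf{Z}'_{\hat{n}'\hat{n}'}$'' is false. The ``independence'' in IQI is independence \emph{across time}, not across parallel subsystems; at each time step the free superprocess may apply an arbitrary joint instrument on the $AB$ system, so the admissible set $\mathsf{Z}^{AB}_{\hat{n}\hat{n}}$ is strictly larger than $\mathsf{Z}^{A}_{\hat{n}\hat{n}}\otimes\mathsf{Z}^{B}_{\hat{n}\hat{n}}$. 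You correctly flag this as the main obstacle, but your proposed fallback (a data-processing argument on discarding the $\mathbf{S}$ register) is not the route the paper takes and would itself require justification, since $I$ is not monotone under partial trace in general.

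The paper closes this gap by exploiting the structure of free \emph{processes} in IQI rather than free superprocesses: a free $\mathbf{S}^{B}_{\hat{n}}$ is of the explicit form $(\mathbbm{1}\otimes\rho_1)\otimes\cdots\otimes(\mathbbm{1}\otimes\rho_{n+1})$, i.e.\ completely uncorrelated in time. Contracting this against an arbitrary joint $\mathbf{Z}^{AB}_{\hat{n}\hat{n}}\in\mathsf{Z}^{AB}_{\hat{n}\hat{n}}$ simply feeds a fixed product state into the $B$ inputs and traces out the $B$ outputs at each step, which yields an object $\mathbf{Z}^{\prime A}_{\hat{n}\hat{n}}$ that is still an uncorrelated-in-time sequence of instruments on $A$, hence $\llbracket\mathbf{S}^{B}_{\hat{n}}|\mathsf{Z}^{AB}_{\hat{n}\hat{n}}\rrbracket\subseteq\mathsf{Z}^{A}_{\hat{n}\hat{n}}$. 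The supremum over $\mathsf{Z}^{AB}_{\hat{n}\hat{n}}$ is then bounded by the supremum over $\mathsf{Z}^{A}_{\hat{n}\hat{n}}$, giving $\overline{I}_{\hat{m}}(\mathbf{T}_{\hat{n}}\otimes\mathbf{S}_{\hat{n}})\leq\overline{I}_{\hat{m}}(\mathbf{T}_{\hat{n}})$ directly. (As an aside: the proposition as labelled in the paper is about \emph{sequential} composition $\mathbf{S}_{\hat{n}'}\circ\mathbf{T}_{\hat{n}}$, and its proof uses that IQI superprocesses on concatenated time sets split as $\mathbf{Z}_{\hat{n}'\hat{n}'}\circ\mathbf{Z}_{\hat{n}\hat{n}}$ --- there the temporal factorisation \emph{is} legitimate because IQI is defined by temporal independence. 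The statement wording you were given refers to the parallel case, which is the paper's subsequent proposition.)
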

\begin{proof}
Let $\mathbf{T}_{\hat{n}}$ be an arbitrary process, and $\mathbf{S}_{\hat{n}'}$ a free one. Then,\begin{align}
    \overline{I}_{(\hat{m}, t' , \hat{m}')}( \mathbf{S}_{\hat{n}'} \circ \mathbf{T}_{\hat{n}} ) =&  \sup_{\mathbf{Z}_{(\hat{n} , \hat{n}')(\hat{n} , \hat{n}')} } I\Big( \llbracket \mathbf{S}_{\hat{n}'} \circ \mathbf{T}_{\hat{n}} | \mathbf{Z}_{(\hat{n}'\circ \hat{n}) (\hat{n} , \hat{n}')} | \mathbf{I}_{{(\hat{n} , \hat{n}')}\setminus{(\hat{m}, t' , \hat{m}')}} \rrbracket  \Big) \\
    =& \sup_{\mathbf{Z}_{\hat{n}'\hat{n}'} \circ \mathbf{Z}_{\hat{n}\hat{n}}  } I\Big( \llbracket \mathbf{S}_{\hat{n}'} \circ \mathbf{T}_{\hat{n}} | \mathbf{Z}_{\hat{n}'\hat{n}'} \circ \mathbf{Z}_{\hat{n}\hat{n}}  | \mathbf{I}_{{(\hat{n} , \hat{n}')}\setminus{(\hat{m}, t' , \hat{m}')}} \rrbracket  \Big) \\
    =& \sup_{\mathbf{Z}_{\hat{n}'\hat{n}'} , \mathbf{Z}_{\hat{n}\hat{n}}  } I\Big( \llbracket \mathbf{S}_{\hat{n}'}  | \mathbf{Z}_{\hat{n}'\hat{n}'}  | \mathbf{I}_{{ \hat{n}'}\setminus{ \hat{m}'}} \rrbracket \otimes \llbracket \mathbf{T}_{\hat{n}} |  \mathbf{Z}_{\hat{n}\hat{n}}  | \mathbf{I}_{{\hat{n}}\setminus{\hat{m}}} \rrbracket  \Big) \\
    =& \sup_{\mathbf{Z}_{\hat{n}'\hat{n}'} } I\Big( \llbracket \mathbf{S}_{\hat{n}\prime} | \mathbf{Z}_{\hat{n}'\hat{n}'} | \mathbf{I}_{ \hat{n}' \setminus \hat{m}'} \rrbracket  \Big) + \sup_{\mathbf{Z}_{\hat{n}\hat{n}} } I\Big( \llbracket \mathbf{T}_{\hat{n}} | \mathbf{Z}_{\hat{n}\hat{n}} | \mathbf{I}_{{\hat{n}}\setminus\hat{m}} \rrbracket  \Big) \\
    =& \sup_{\mathbf{Z}_{\hat{n}\hat{n}} } I\Big( \llbracket \mathbf{T}_{\hat{n}} | \mathbf{Z}_{\hat{n}\hat{n}} | \mathbf{I}_{{\hat{n}}\setminus\hat{m}} \rrbracket  \Big) \\
    =& \overline{I}_{\hat{m}}( \mathbf{T}_{\hat{n}} ).
\end{align}
The second equality is due to the fact that superprocesses in IQI are entirely uncorrelated between different times. The fourth equality holds due to additivity of entropy for uncorrelated subsystems, and because sequential composition does not impart correlations. The penultimate line follows from the fact that $\mathbf{S}_{\hat{n}'}$ is a free process, and hence has zero monotone value according to $\overline{I}$. 
\end{proof}

We move on to parallel composition. 
\begin{proposition}[Parallel Composition]
    $\overline{I}_{\hat{m}}$ is invariant under parallel composition with free processes in IQI.
\end{proposition}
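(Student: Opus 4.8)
The plan is to mirror the proof of the preceding proposition, replacing the sequential composition $\circ$ by the parallel (tensor) composition $\otimes$ throughout. Let $\mathbf{T}_{\hat{n}}$ be an arbitrary process and $\mathbf{S}_{\hat{n}'}$ a free one; their parallel composition $\mathbf{S}_{\hat{n}'} \otimes \mathbf{T}_{\hat{n}}$ lives on two non-interacting wires, so the relevant monotone is $\overline{I}$ evaluated at the combined temporal resolution, which I will write $\overline{I}_{(\hat{m},\hat{m}')}$. First I would unfold the definition, writing $\overline{I}_{(\hat{m},\hat{m}')}( \mathbf{S}_{\hat{n}'} \otimes \mathbf{T}_{\hat{n}} )$ as the supremum of $I\big( \llbracket \mathbf{S}_{\hat{n}'} \otimes \mathbf{T}_{\hat{n}} \, | \, \mathbf{Z}_{(\hat{n},\hat{n}')(\hat{n},\hat{n}')} \, | \, \mathbf{I}_{(\hat{n},\hat{n}') \setminus (\hat{m},\hat{m}')} \rrbracket \big)$ over superprocesses $\mathbf{Z}_{(\hat{n},\hat{n}')(\hat{n},\hat{n}')} \in \mathsf{Z}_{(\hat{n},\hat{n}')(\hat{n},\hat{n}')}$.

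The key structural step is that in IQI every free superprocess on the composite system is uncorrelated not only across times but also across the two wires, so it factorises as $\mathbf{Z}_{\hat{n}'\hat{n}'} \otimes \mathbf{Z}_{\hat{n}\hat{n}}$, and likewise the coarse-graining factorises as $\mathbf{I}_{\hat{n}' \setminus \hat{m}'} \otimes \mathbf{I}_{\hat{n} \setminus \hat{m}}$. Hence the Choi state of the composite, $\llbracket \mathbf{S}_{\hat{n}'} \otimes \mathbf{T}_{\hat{n}} \, | \, \mathbf{Z}_{\hat{n}'\hat{n}'} \otimes \mathbf{Z}_{\hat{n}\hat{n}} \, | \, \mathbf{I}_{\hat{n}' \setminus \hat{m}'} \otimes \mathbf{I}_{\hat{n} \setminus \hat{m}} \rrbracket$, equals $\llbracket \mathbf{S}_{\hat{n}'} | \mathbf{Z}_{\hat{n}'\hat{n}'} | \mathbf{I}_{\hat{n}' \setminus \hat{m}'} \rrbracket \otimes \llbracket \mathbf{T}_{\hat{n}} | \mathbf{Z}_{\hat{n}\hat{n}} | \mathbf{I}_{\hat{n} \setminus \hat{m}} \rrbracket$. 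Since $I$ is the relative entropy of a state against the tensor product of all its single-subsystem marginals, it is additive on tensor products, so the supremum splits as $\sup_{\mathbf{Z}_{\hat{n}'\hat{n}'}} I\big( \llbracket \mathbf{S}_{\hat{n}'} | \mathbf{Z}_{\hat{n}'\hat{n}'} | \mathbf{I}_{\hat{n}' \setminus \hat{m}'} \rrbracket \big) + \sup_{\mathbf{Z}_{\hat{n}\hat{n}}} I\big( \llbracket \mathbf{T}_{\hat{n}} | \mathbf{Z}_{\hat{n}\hat{n}} | \mathbf{I}_{\hat{n} \setminus \hat{m}} \rrbracket \big) = \overline{I}_{\hat{m}'}(\mathbf{S}_{\hat{n}'}) + \overline{I}_{\hat{m}}(\mathbf{T}_{\hat{n}})$. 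Finally $\mathbf{S}_{\hat{n}'}$ being free gives $\overline{I}_{\hat{m}'}(\mathbf{S}_{\hat{n}'}) = 0$, which closes the argument.

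The main obstacle --- and the only place the IQI structure genuinely enters --- is justifying that the supremum over composite superprocesses may be restricted to product superprocesses without loss. In IQI this is immediate, since the free superprocesses are by definition products of local instruments across all subsystems and thus cannot couple the two parallel wires; in a theory that permits correlated interventions one would instead need a separate data-processing-type argument showing that such correlations cannot raise the total-correlation monotone. I would also note, as in the proof of Thm.~\ref{thm:subadditivity}, that compactness of $\mathsf{Z}_{\hat{n}\hat{n}}$ lets the suprema be attained, so the splitting can be phrased via an optimal product superprocess; and the same factorisation applies verbatim to $\overline{M}_{\hat{m}}$ and $\overline{N}_{\hat{m}}$, whose defining divergences are likewise additive on tensor products.
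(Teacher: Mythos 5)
Your argument has a genuine gap in the direction that actually requires work. You assert that in IQI every free superprocess on the composite system ``is uncorrelated not only across times but also across the two wires, so it factorises as $\mathbf{Z}_{\hat{n}'\hat{n}'} \otimes \mathbf{Z}_{\hat{n}\hat{n}}$.'' That is not what IQI means: the ``independent'' in independent quantum instruments refers to independence \emph{across times}, not across parallel spatial subsystems. At any single time the allowed intervention is an arbitrary instrument on the full joint system, so a superprocess acting on $\mathbf{T}^{A}_{\hat{n}} \otimes \mathbf{S}^{B}_{\hat{n}}$ may couple the $A$ and $B$ wires at each step. The paper is explicit about this: it notes that the supremum over product superprocesses $\mathbf{Z}^{A}_{\hat{n}\hat{n}} \otimes \mathbf{Z}'^{B}_{\hat{n}\hat{n}}$, ``instead of over all superprocesses $\mathbf{Z}^{AB}_{\hat{n}\hat{n}}$ that IQI permits,'' is only a lower bound. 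Restricting to products therefore gives you the easy inequality $\overline{I}_{\hat{m}}(\mathbf{T}_{\hat{n}} \otimes \mathbf{S}_{\hat{n}}) \geq \overline{I}_{\hat{m}}(\mathbf{T}_{\hat{n}})$ (your additivity step and the vanishing of the monotone on the free process are fine for that half), but it does not establish the reverse inequality, which is where the content lies.

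The missing idea is the one the paper uses for the upper bound: a free process in IQI is completely uncorrelated in time, of the form $(\mathbbm{1} \otimes \rho_1) \otimes \dots \otimes (\mathbbm{1} \otimes \rho_{n+1})$, so contracting $\mathbf{S}^{B}_{\hat{n}}$ into an arbitrary (possibly wire-correlated) superprocess $\mathbf{Z}^{AB}_{\hat{n}\hat{n}}$ produces an effective superprocess $\mathbf{Z}^{\prime A}_{\hat{n}\hat{n}} = \llbracket \mathbf{S}^{B}_{\hat{n}} | \mathbf{Z}^{AB}_{\hat{n}\hat{n}} \rrbracket$ acting on $A$ alone that still cannot transmit correlations through time, i.e., $\llbracket \mathbf{S}^{B}_{\hat{n}} | \mathsf{Z}^{AB}_{\hat{n}\hat{n}} \rrbracket \subseteq \mathsf{Z}^{A}_{\hat{n}\hat{n}}$. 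The supremum over these reachable $\mathbf{Z}^{\prime A}_{\hat{n}\hat{n}}$ is then dominated by the supremum over all of $\mathsf{Z}^{A}_{\hat{n}\hat{n}}$, which yields $\overline{I}_{\hat{m}}(\mathbf{T}_{\hat{n}} \otimes \mathbf{S}_{\hat{n}}) \leq \overline{I}_{\hat{m}}(\mathbf{T}_{\hat{n}})$. This is exactly the ``separate data-processing-type argument'' you flagged as needed only for theories with correlated interventions; in fact it is needed here, and without it your proof establishes only one of the two inequalities.
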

\begin{proof}
We begin by showing that $\overline{I}_{\hat{m}}( \mathbf{T}_{\hat{n}} \otimes \mathbf{S}_{\hat{n}} )  \geq \overline{I}_{\hat{m}}( \mathbf{T}_{\hat{n}})$, i.e., $\overline{I}_{\hat{m}}$ is is non-decreasing under parallel composition with free processes $\mathbf{S}_{\hat{n}}$. We label the subsystems of original process $\mathbf{T}_{\hat{n}}$ and free process $\mathbf{S}_{\hat{n}}$ as $A$ and $B$ respectively. With this, we have
\begin{align}
    \overline{I}_{\hat{m}}( \mathbf{T}_{\hat{n}} \otimes \mathbf{S}_{\hat{n}} ) 
    =& \sup_{\mathbf{Z}^{AB}_{\hat{n}\hat{n}} } I\Big( \llbracket \mathbf{T}^{A}_{\hat{n}} \otimes \mathbf{S}^{B}_{\hat{n}} | \mathbf{Z}^{AB}_{\hat{n}\hat{n}}| \mathbf{I}^{AB}_{{\hat{n}}\setminus{\hat{m}}} \rrbracket  \Big) \\
    \geq & \sup_{\mathbf{Z}^{A}_{\hat{n}\hat{n}},\mathbf{Z}'^{B}_{\hat{n}\hat{n}} } I\Big( \llbracket \mathbf{T}^{A}_{\hat{n}} \otimes \mathbf{S}^{B}_{\hat{n}} | \mathbf{Z}^{A}_{\hat{n}\hat{n}} \otimes \mathbf{Z}'^{B}_{\hat{n}\hat{n}}| \mathbf{I}^{AB}_{{\hat{n}}\setminus{\hat{m}}} \rrbracket  \Big) \\
    =& \sup_{\mathbf{Z}^{A}_{\hat{n}\hat{n}} } I\Big( \llbracket \mathbf{T}^{A}_{\hat{n}}  | \mathbf{Z}^{A}_{\hat{n}\hat{n}}| \mathbf{I}^{A}_{{\hat{n}}\setminus{\hat{m}}} \rrbracket  \Big) + \sup_{\mathbf{Z}'^{B}_{\hat{n}\hat{n}} } I\Big( \llbracket \mathbf{S}^{B}_{\hat{n}}  | \mathbf{Z}'^{B}_{\hat{n}\hat{n}}| \mathbf{I}^{B}_{{\hat{n}}\setminus{\hat{m}}} \rrbracket  \Big) \\
    =& \sup_{\mathbf{Z}^{A}_{\hat{n}\hat{n}} } I\Big( \llbracket \mathbf{T}^{A}_{\hat{n}}  | \mathbf{Z}^{A}_{\hat{n}\hat{n}}| \mathbf{I}^{A}_{{\hat{n}}\setminus{\hat{m}}} \rrbracket  \Big)  \\
    =&\overline{I}_{\hat{m}}( \mathbf{T}_{\hat{n}}).
\end{align}
The inequality holds because the supremum over product superprocesses $\mathbf{Z}^{A}_{\hat{n}\hat{n}} \otimes \mathbf{Z}'^{B}_{\hat{n}\hat{n}}$ (instead of over \textit{all} superprocesses $\mathbf{Z}^{AB}_{\hat{n}\hat{n}}$ that IQI permits) is not necessarily optimal. The second equality holds due to the additivity of entropy for uncorrelated subsystems. 

Now we show that monotone value is non-increasing. To do so, we employ the fact that free processes $\mathbf{S}^{B}_{\hat{n}}$ in IQI are completely uncorrelated in time, i.e, they are of the form $(\mathbbm{1} \otimes \rho_1)\otimes \dots \otimes (\mathbbm{1} \otimes \rho_{n+1})$. Consequently, $ \llbracket \mathbf{S}^{B}_{\hat{n}} | \mathbf{Z}^{AB}_{\hat{n}\hat{n}}  \rrbracket = \mathbf{Z}^{\prime A}_{\hat{n}\hat{n}}$ is a free superprocess in IQI and we have $ \llbracket \mathbf{S}^{B}_{\hat{n}} | \mathsf{Z}^{AB}_{\hat{n}\hat{n}} \rrbracket \subseteq \mathsf{Z}^{A}_{\hat{n}\hat{n}}$. Here, $B$ is subsumed into a larger effective $A$ subsystem, and $ \mathbf{Z}^{\prime A}_{\hat{n}\hat{n}}$ is still unable to transmit correlations on this effective $A$ subsystem through time. With this, we can show:
\begin{align}
    \overline{I}_{\hat{m}}( \mathbf{T}_{\hat{n}} \otimes \mathbf{S}_{\hat{n}} ) =& \sup_{\mathbf{Z}^{AB}_{\hat{n}\hat{n}} } I\Big( \llbracket \mathbf{T}^{A}_{\hat{n}} \otimes \mathbf{S}^{B}_{\hat{n}} | \mathbf{Z}^{AB}_{\hat{n}\hat{n}}| \mathbf{I}^{AB}_{{\hat{n}}\setminus{\hat{m}}} \rrbracket  \Big) \\
    =&  \sup_{\mathbf{Z}^{\prime A}_{\hat{n}\hat{n}} } I\Big( \llbracket \mathbf{T}^{A}_{\hat{n}} | \mathbf{Z}^{\prime A}_{\hat{n}\hat{n}}| \mathbf{I}^{A}_{{\hat{n}}\setminus{\hat{m}}} \rrbracket  \Big)  \\
    \leq & \sup_{\mathbf{Z}^{A}_{\hat{n}\hat{n}} \in \mathsf{Z}_{\hat{n}\hat{n}}^{A}} I\Big( \llbracket \mathbf{T}^{A}_{\hat{n}}  | \mathbf{Z}^{A}_{\hat{n}\hat{n}}| \mathbf{I}^{A}_{{\hat{n}}\setminus{\hat{m}}} \rrbracket  \Big)  \\
    =&\overline{I}_{\hat{m}}( \mathbf{T}_{\hat{n}}),
\end{align}
where we have used in the second equality that free processes in IQI do not contain any correlations in time, and thus do not lead to superprocesses $\mathbf{Z}_{\hat{n}\hat{n}}^{\prime A}$ that lie outside of what is possible within IQI. 
\end{proof}

\section{Correspondence with Generalised Comb Divergences}
The irreversibility monotones $\overline{I}_{\hat{m}}(\mathbf{T}_{\hat{n}}),     \overline{N}_{\hat{m}}(\mathbf{T}_{\hat{n}})$ and $\overline{M}_{\hat{m}}(\mathbf{T}_{\hat{n}})$ we introduced in Thm.~\ref{thm::irrev} above decrease monotonically under the free superprocesses of IQI and temporal coarse-graining. Analogously, the \textit{generalised comb divergences} introduced in Ref.~\cite{zambon2024processtensordistinguishabilitymeasures} decrease monotonically under \textit{all} superprocesses $\mathbf{Z}_{\hat{n}\hat{m}}$. Instead of optimising over superprocesses as we do in this manuscript, these generalised comb divergences optimise over an input comb. Adapting generalised comb divergences to our setting, we define
\begin{equation}
\label{eqn::gen_div}
    D\big(  \mathbf{T}_{\hat{n}} \big\Vert \mathbf{R}_{\hat{n}} \big) := \sup_{\mathbf{S}_{\hat{n}} \in \mathsf{S}_{\hat{n}}} S\Big( \llbracket \mathbf{T}_{\hat{n}}| \mathbf{S}_{\hat{n}}\rrbracket \Big\Vert \llbracket \mathbf{R}_{\hat{n}}|\mathbf{S}_{\hat{n}}\rrbracket \Big),
\end{equation}
where $\mathbf{S}_{\hat{n}}$ is a control comb that is such that $\llbracket \mathbf{T}_{\hat{n}}| \mathbf{S}_{\hat{n}}\rrbracket$ and $\llbracket \mathbf{R}_{\hat{n}}||\mathbf{S}_{\hat{n}}\rrbracket$ correspond to quantum channels, in contrast to Ref.~\cite{zambon2024processtensordistinguishabilitymeasures}, where these correspond to quantum \textit{states}. This difference is due to the fact that our primary concern is the noisiness of this resultant channel. In either case, the $\mathbf{S}_{\hat{n}}$ comb may contain auxiliary subsystems that do not directly interact with $\mathbf{T}_{\hat{n}}$. 

Here, we show that the resource monotones introduced in Thm.~\ref{thm::irrev} are a special case of Eq.~\eqref{eqn::gen_div}, with the only difference that the supremum is taken over a limited set of control combs $\mathsf{S}^\textup{reach}_{\hat{n}}$ instead of arbitray control combs. To this end, we first note that, given sets of superprocesses $\mathsf{Z}_{\hat{n}\hat{m}}$, possibly changing the temporal resolution, we can obtain the corresponding sets $\mathsf{S}^\text{reach}_{\hat{n}}$ of \textit{reachable} control combs as those that can be obtained from $\mathsf{Z}_{\hat{n}\hat{n}}$ via coarse graining, i.e., 
\begin{gather}
    \mathsf{S}^\text{reach}_{\hat{n}} := \{ \mathbf{S}_{\hat{n}} = \llbracket \mathbf{Z}_{\hat{n}\hat{m}}| \mathbf{I}_{\hat{m}}\rrbracket \ \ \text{for some } \mathbf{Z}_{\hat{n}\hat{m}} \in \mathsf{Z}_{\hat{n}\hat{m}}\}.
\end{gather}
In what follows we only consider \textit{compatible} sets $\mathsf{Z}_{\hat{n}\hat{m}}$ of superprocesses that are connected via temporal coarse graining between different temporal resolutions, i.e., $\mathsf{Z}_{\hat{n}\hat{m}} = \llbracket \mathsf{Z}_{\hat{n}\hat{n}}|\mathbf{I}_{\hat{n}\setminus \hat{m}}\rrbracket$ for all $\hat{m}\subseteq \hat{n}$, and are closed under composition, i.e., $\mathsf{Z}_{\hat{n}\hat{m}} \mathsf{Z}_{\hat{m}\hat{\ell}} \subseteq \mathsf{Z}_{\hat{n}\hat{\ell}}$ for all $\hat{n}, \hat{m}$ and $\hat{\ell}$. With this, we have the following Lemma:

\begin{lemma}
Let $\mathsf{Z}_{\hat{n}\hat{m}}$ be compatible sets of superprocesses, and let $\mathsf{S}^\textup{reach}_{\hat{n}}$ be the corresponding sets of reachable control combs. Then the \textit{reachable} comb divergence 
\begin{gather}
    D^\textup{reach}\big( \mathbf{T}_{\hat{n}} \big\Vert \mathbf{R}_{\hat{n}} \big) := \sup_{\mathbf{S}_{\hat{n}} \in \mathsf{S}^\textup{reach}_{\hat{n}}} S\Big( \llbracket \mathbf{T}_{\hat{n}}| \mathbf{S}_{\hat{n}}\rrbracket \Big\Vert \llbracket \mathbf{R}_{\hat{n}}|\mathbf{S}_{\hat{n}}\rrbracket \Big),
\end{gather}
decreases monotonically under all superprocesses in $\mathsf{Z}_{\hat{n}\hat{m}}$.
\end{lemma}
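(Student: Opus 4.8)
The plan is to show that composing with a superprocess $\mathbf{W}_{\hat{n}\hat{m}}\in\mathsf{Z}_{\hat{n}\hat{m}}$ simply shrinks the family of control combs over which the defining supremum is taken, so that monotonicity of $D^{\textup{reach}}$ becomes a purely structural fact rather than a consequence of the data-processing inequality. Writing $\mathbf{T}'_{\hat{m}}:=\llbracket\mathbf{T}_{\hat{n}}\,|\,\mathbf{W}_{\hat{n}\hat{m}}\rrbracket$ and $\mathbf{R}'_{\hat{m}}:=\llbracket\mathbf{R}_{\hat{n}}\,|\,\mathbf{W}_{\hat{n}\hat{m}}\rrbracket$, the target inequality is $D^{\textup{reach}}(\mathbf{T}'_{\hat{m}}\|\mathbf{R}'_{\hat{m}})\leq D^{\textup{reach}}(\mathbf{T}_{\hat{n}}\|\mathbf{R}_{\hat{n}})$, where the left-hand supremum ranges over $\mathsf{S}^{\textup{reach}}_{\hat{m}}$ and the right-hand one over $\mathsf{S}^{\textup{reach}}_{\hat{n}}$.

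First I would fix an arbitrary reachable comb $\mathbf{S}_{\hat{m}}\in\mathsf{S}^{\textup{reach}}_{\hat{m}}$, which by definition has the form $\mathbf{S}_{\hat{m}}=\llbracket\mathbf{Z}_{\hat{m}\hat{\ell}}\,|\,\mathbf{I}_{\hat{\ell}}\rrbracket$ for some $\mathbf{Z}_{\hat{m}\hat{\ell}}\in\mathsf{Z}_{\hat{m}\hat{\ell}}$ with $\hat{\ell}\subseteq\hat{m}$. Using associativity of the link product I would rewrite $\llbracket\mathbf{T}'_{\hat{m}}\,|\,\mathbf{S}_{\hat{m}}\rrbracket=\llbracket\mathbf{T}_{\hat{n}}\,|\,\mathbf{S}'_{\hat{n}}\rrbracket$ and $\llbracket\mathbf{R}'_{\hat{m}}\,|\,\mathbf{S}_{\hat{m}}\rrbracket=\llbracket\mathbf{R}_{\hat{n}}\,|\,\mathbf{S}'_{\hat{n}}\rrbracket$, where $\mathbf{S}'_{\hat{n}}:=\llbracket\mathbf{W}_{\hat{n}\hat{m}}\,|\,\mathbf{S}_{\hat{m}}\rrbracket$. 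The key step is to verify $\mathbf{S}'_{\hat{n}}\in\mathsf{S}^{\textup{reach}}_{\hat{n}}$: associativity again gives $\mathbf{S}'_{\hat{n}}=\llbracket\,\llbracket\mathbf{W}_{\hat{n}\hat{m}}\,|\,\mathbf{Z}_{\hat{m}\hat{\ell}}\rrbracket\,|\,\mathbf{I}_{\hat{\ell}}\rrbracket$, and since the compatible family is closed under composition, $\mathsf{Z}_{\hat{n}\hat{m}}\mathsf{Z}_{\hat{m}\hat{\ell}}\subseteq\mathsf{Z}_{\hat{n}\hat{\ell}}$, the superprocess $\llbracket\mathbf{W}_{\hat{n}\hat{m}}\,|\,\mathbf{Z}_{\hat{m}\hat{\ell}}\rrbracket$ lies in $\mathsf{Z}_{\hat{n}\hat{\ell}}$, so $\mathbf{S}'_{\hat{n}}$ has exactly the form required of a reachable comb on the $\hat{n}$ times. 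Hence $S\big(\llbracket\mathbf{T}'_{\hat{m}}|\mathbf{S}_{\hat{m}}\rrbracket\big\|\llbracket\mathbf{R}'_{\hat{m}}|\mathbf{S}_{\hat{m}}\rrbracket\big)=S\big(\llbracket\mathbf{T}_{\hat{n}}|\mathbf{S}'_{\hat{n}}\rrbracket\big\|\llbracket\mathbf{R}_{\hat{n}}|\mathbf{S}'_{\hat{n}}\rrbracket\big)\leq D^{\textup{reach}}(\mathbf{T}_{\hat{n}}\|\mathbf{R}_{\hat{n}})$, and taking the supremum over $\mathbf{S}_{\hat{m}}\in\mathsf{S}^{\textup{reach}}_{\hat{m}}$ closes the argument.

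I expect the main obstacle to be purely the bookkeeping of temporal resolutions: one must keep straight that $\mathbf{W}_{\hat{n}\hat{m}}$ sends an $\hat{n}$-process to an $\hat{m}$-process, that $\mathbf{Z}_{\hat{m}\hat{\ell}}$ then sends it to an $\hat{\ell}$-process with $\hat{\ell}\subseteq\hat{m}\subseteq\hat{n}$, and that every link product in the chain (including the final plug-in with $\mathbf{I}_{\hat{\ell}}$) contracts the intended subsystems, so that the closure hypothesis $\mathsf{Z}_{\hat{n}\hat{m}}\mathsf{Z}_{\hat{m}\hat{\ell}}\subseteq\mathsf{Z}_{\hat{n}\hat{\ell}}$ applies verbatim. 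It is worth noting that the proof never invokes the data-processing inequality for $S$; the monotonicity is entirely structural, stemming from the inclusion $\{\llbracket\mathbf{W}_{\hat{n}\hat{m}}|\mathbf{S}_{\hat{m}}\rrbracket:\mathbf{S}_{\hat{m}}\in\mathsf{S}^{\textup{reach}}_{\hat{m}}\}\subseteq\mathsf{S}^{\textup{reach}}_{\hat{n}}$, so the same argument would establish monotonicity for any functional of the resulting pair of channels in place of the relative entropy.
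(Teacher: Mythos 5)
Your proposal is correct and follows essentially the same route as the paper: both arguments reduce monotonicity to the structural inclusion $\{\llbracket\mathbf{W}_{\hat{n}\hat{m}}|\mathbf{S}_{\hat{m}}\rrbracket:\mathbf{S}_{\hat{m}}\in\mathsf{S}^{\textup{reach}}_{\hat{m}}\}\subseteq\mathsf{S}^{\textup{reach}}_{\hat{n}}$, obtained from closure under composition and compatibility, followed by bounding a supremum over a subset by the supremum over the full set. The paper merely phrases the same steps by rewriting the reachable combs as $\llbracket\mathbf{Z}'_{\hat{m}\hat{m}}|\mathbf{I}_{\hat{m}}\rrbracket$ inside the supremum rather than fixing a single comb, and your closing observation that no data-processing inequality is needed matches the paper's argument.
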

For the particular case of IQI predominantly considered in Ref.~\cite{extractingdynamicalquantumresources}, any Markovian control comb is within $\mathsf{S}^{\text{reach}}_{\hat{n}}$.
\begin{proof}
    The proof follows the same lines as that of Thm. 1 in Ref.~\cite{zambon2024processtensordistinguishabilitymeasures}. Let $\mathbf{Z}_{\hat{n}\hat{m}} \in \mathsf{Z}_{\hat{n}\hat{m}}$. We have 
    \begin{align}
    D^\text{reach}\big( \llbracket \mathbf{T}_{\hat{n}}| \mathbf{Z}_{\hat{n}\hat{m}}\rrbracket \big\Vert \llbracket\mathbf{R}_{\hat{n}} |   \mathbf{Z}_{\hat{n}\hat{m}}\rrbracket\big) &= \sup_{\mathbf{S}_{\hat{m}} \in \mathsf{S}^\text{reach}_{\hat{m}}} S\Big( \llbracket \mathbf{T}_{\hat{n}}|\mathbf{Z}_{\hat{n}\hat{m}}| \mathbf{S}_{\hat{m}}\rrbracket \Big\Vert \llbracket \mathbf{R}_{\hat{n}}|\mathbf{Z}_{\hat{n}\hat{m}}|\mathbf{S}_{\hat{m}}\rrbracket \Big) \\
    & = \sup_{\mathbf{Z}'_{\hat{m}\hat{m}} \in \mathsf{Z}_{\hat{m}\hat{m}}} S\Big( \llbracket \mathbf{T}_{\hat{n}}|\mathbf{Z}_{\hat{n}\hat{m}}\mathbf{Z}'_{\hat{m}\hat{m}}|\mathbf{I}_{\hat{m}}\rrbracket \Big\Vert \llbracket \mathbf{R}_{\hat{m}}|\mathbf{Z}_{\hat{n}\hat{m}}\mathbf{Z}'_{\hat{m}\hat{m}}|\mathbf{I}_{\hat{m}}\rrbracket \Big) \\
    \label{eqn::reachDiv}
    &\leq \sup_{\mathbf{Z}'_{\hat{n}\hat{m}} \in \mathsf{Z}_{\hat{n}\hat{m}}} S\Big( \llbracket \mathbf{T}_{\hat{n}}|\mathbf{Z}'_{\hat{n}\hat{m}}|\mathbf{I}_{\hat{m}}\rrbracket \Big\Vert \llbracket \mathbf{R}_{\hat{m}}|\mathbf{Z}'_{\hat{n}\hat{m}}|\mathbf{I}_{\hat{m}}\rrbracket \Big) \\
    \label{eqn::reachDiv2}
    &= \sup_{\mathbf{S}_{\hat{n}} \in \mathsf{S}^\text{reach}_{\hat{n}}} S\Big( \llbracket \mathbf{T}_{\hat{n}}|\mathbf{S}_{\hat{n}}\rrbracket \Big\Vert \llbracket \mathbf{R}_{\hat{m}}|\mathbf{S}_{\hat{n}}\rrbracket \Big) = D^\textup{reach}\big( \mathbf{T}_{\hat{n}} \big\Vert \mathbf{R}_{\hat{n}} \big),
    \end{align}
where, for the last two lines, we have used the fact that $\mathbf{Z}_{\hat{n}\hat{m}}\mathsf{Z}_{\hat{m}\hat{m}} \subseteq \mathsf{Z}_{\hat{n}\hat{m}}$, as well as that $\llbracket\mathsf{Z}_{\hat{n}\hat{m}}|\mathbf{I}_{\hat{m}}\rrbracket = \mathsf{S}_{\hat{n}}^\text{reach}$, which both follow from the from the compatibility of the sets $\mathsf{Z}_{\hat{n}\hat{m}}$.
\end{proof}
As is evident from the proof of this Lemma [in particular in the step from Eq.~\eqref{eqn::reachDiv} to Eq.~\eqref{eqn::reachDiv2}] optimization over sets of control combs, as required for the computation of reachable comb divergences, is equivalent to the optimization over superprocesses $\mathsf{Z}_{\hat{n}\hat{m}}$ and subsequent temporal coarse graining. Consequently, for $\hat{m} = \hat{n}$, the irreversibility monotones $\overline{I}_{\hat{m}}(\mathbf{T}_{\hat{n}}), \overline{N}_{\hat{m}}(\mathbf{T}_{\hat{n}})$ and $\overline{M}_{\hat{m}}(\mathbf{T}_{\hat{n}})$, which follow from an optimization over superprocesses $\mathbf{Z}_{\hat{n}\hat{n}}$ can be understood as following from a reachable comb divergence -- and thus a special version of the generalised comb divergences introduced in Ref.~\cite{zambon2024processtensordistinguishabilitymeasures}. Comparing generalised and reachable comb divergences, monotonicity is only enforced by the former under free transformations of the particular resource theory, as opposed to any arbitrary superprocess with the latter.

\section{Interpretation of Numerical Results of Ref.~\cite{extractingdynamicalquantumresources}}
In Ref.~\cite{extractingdynamicalquantumresources} we used $I,M,N$ to analyse DD in numerical experiments. From these, we concluded that non-Markovianity was consumed in DD, that DD can be understood in terms of temporal resource distillation, and that our multitimescale optimal dynamical decoupling (MODD) technique achieves a high efficacy in noise reduction. However, since $I,M,N$ have been shown to not be monotones in IQI, we must re-examine these empirical conclusions in terms of $\overline{I}_{\hat{m}},\overline{M}_{\hat{m}},\overline{N}_{\hat{m}}$ instead. In doing this, we confirm our original claims.

Multitimescale optimal dynamical decoupling (MODD)~\cite{extractingdynamicalquantumresources} is a method to numerically approximate the optimal superprocess $\mathbf{Z}^{I}_{\hat{n}\hat{n}}$ whose relative entropy is equal to the monotone value
\begin{equation}
   I_{\hat{m}} \Big( \llbracket \mathbf{T}_{\hat{n}} | \mathbf{Z}^{\text{MODD}}_{\hat{n}\hat{n}} | \mathbf{I}_{{\hat{n}}\setminus{\hat{m}}} \rrbracket \Big) \approx I_{\hat{m}} \Big( \llbracket \mathbf{T}_{\hat{n}} | \mathbf{Z}^{I}_{\hat{n}\hat{n}} | \mathbf{I}_{{\hat{n}}\setminus{\hat{m}}} \rrbracket \Big) =\overline{I}_{\hat{m}} (  \mathbf{T}_{\hat{n}} )  .
\end{equation}
In our simulations we optimise $\overline{I}_{\emptyset} (  \mathbf{T}_{\hat{n}} )$ in particular -- the mutual information of the resultant channel. Putting the topic of resource theories aside, this is a convenient indicator of the quality of a quantum channel. Ref.~\cite{extractingdynamicalquantumresources} Fig.~4b corresponds to an experiment with $|\hat{n}|=3$ intermediate times, coarse-graining to $|\emptyset|=0$ and measuring the mutual information of the resultant channel. Fig.~4c is the same but starting with $|\hat{n}|=15$ intermediate times. The top lines on Fig.~4b and Fig.~4c can be understood as numerical approximations of $\overline{I}_{\emptyset} (  \mathbf{T}_{\hat{n}} )$. In these graphs, we see that MODD saturates the potential value of the mutual information $\overline{I}_{\hat{m}} (  \mathbf{T}_{\emptyset} )\approx2$. While DD cannot increase monotone value, the optimal superprocess (here approximately obtained via MODD) can concentrate resource value amongst fewer temporal subsystems, corresponding to a temporal form of resource distillation. 

The other lines on Fig.~4b and Fig.~4c -- corresponding to standard DD, no intervention, concatenated DD -- are all below what is achieved by MODD (as well as ODD at a single timescale), especially for long times. This can be interpreted as loss occurring during the resource distillation process, i.e. that not all of the initial resource value is maintained.

The results of Fig.~4d can be interpreted as the finding that greater success at DD coincides with greater consumption of non-Markovianity in the initial resource. However, reaching this conclusions requires some careful thought. Due their monotonicity, $\overline{I}_{\hat{m}},\overline{M}_{\hat{m}},\overline{N}_{\hat{m}}$ can only decrease under the free transformations of IQI. However -- by definition -- there is (at least) one transformation that will not decrease the value of an irreversibility monotone. These are the superprocesses $\mathbf{Z}^{I}_{\hat{n}\hat{n}},\mathbf{Z}^{M}_{\hat{n}\hat{n}},\mathbf{Z}^{N}_{\hat{n}\hat{n}}$ that maximise each of their respective monotones. Critically, these superprocesses need not be the same for different monotones, and hence optimising $\overline{M}_{\hat{m}}$ in general requires sacrificing $\overline{N}_{\hat{m}}$. It is in this sense that non-Markovianity is `expended' during DD.

We observe this phenomenon in Fig.~4d. This plots the difference in the value of $I,M,N$ between the the coarse-grained process after doing nothing $\llbracket \mathbf{T}_{\hat{n}} | \mathbf{I}_{{\hat{n}}\setminus{\hat{m}}} \rrbracket$ and one of $\llbracket \mathbf{T}_{\hat{n}} | \mathbf{Z}^{\text{DD}}_{\hat{n}\hat{n}} | \mathbf{I}_{{\hat{n}}\setminus{\hat{m}}} \rrbracket$ or $\llbracket \mathbf{T}_{\hat{n}} | \mathbf{Z}^{\text{MODD}}_{\hat{n}\hat{n}} | \mathbf{I}_{{\hat{n}}\setminus{\hat{m}}} \rrbracket$, 
\begin{equation}
    \Delta I_{DD}=I\Big( \llbracket \mathbf{T}_{\hat{n}} | \mathbf{Z}^{\text{DD}}_{\hat{n}\hat{n}} | \mathbf{I}_{{\hat{n}}\setminus{\hat{m}}} \rrbracket \Big)-I\Big(\llbracket \mathbf{T}_{\hat{n}}  |\mathbf{I}_{{\hat{n}}\setminus{\hat{m}}} \rrbracket \Big), \quad \Delta I_{MODD}=I\Big( \llbracket \mathbf{T}_{\hat{n}} | \mathbf{Z}^{\text{MODD}}_{\hat{n}\hat{n}} | \mathbf{I}_{{\hat{n}}\setminus{\hat{m}}} \rrbracket \Big)-I\Big(\llbracket \mathbf{T}_{\hat{n}} | \mathbf{I}_{{\hat{n}}\setminus{\hat{m}}} \rrbracket \Big),
\end{equation}
and similar for $M,N$, where coarse-graining is from $|\hat{n}|=15$ intermediate times to $|\hat{m}|=3$. $ \mathbf{Z}^{\text{DD}}_{\hat{n}\hat{n}}$ corresponds to the prototypical $X,Z,X,Z$ DD pulse sequence, and $\mathbf{Z}^{\text{MODD}}_{\hat{n}\hat{n}}$ is MODD. Note that, since $I,M,N$ are not monotones $I\Big(\llbracket \mathbf{T}_{\hat{n}} |  \mathbf{I}_{{\hat{n}}\setminus{\hat{m}}} \rrbracket \Big),M\Big(\llbracket \mathbf{T}_{\hat{n}} | \mathbf{I}_{{\hat{n}}\setminus{\hat{m}}} \rrbracket \Big),N\Big(\llbracket \mathbf{T}_{\hat{n}} | \mathbf{I}_{{\hat{n}}\setminus{\hat{m}}} \rrbracket \Big)$ do not have any special meaning in IQI, and are not relevant to our analysis. As such, $\Delta I_{MODD}$ can be understood as an approximation of $\overline{I}$, subtracting an arbitrary offset of $I\Big(\llbracket \mathbf{T}_{\hat{n}}  |\mathbf{I}_{{\hat{n}}\setminus{\hat{m}}} \rrbracket \Big)$. By comparison $\Delta I_{DD}$ is a lower bound on $\overline{I}$, shifted by the same offset. The same can be said of $\overline{M}$ and $\overline{N}$. 

Importantly, what can be seen on Fig.~4d is that at long timescales MODD achieves higher values for $I$ and $M$ than DD does, and lower values for $N$. Since MODD is optimising $\overline{I}_{\emptyset} (  \mathbf{T}_{\hat{n}} )=\overline{M}_{\emptyset} (  \mathbf{T}_{\hat{n}} )$, we do not guarantee that it also optimises $\overline{I}_{\hat{m}}$ or $\overline{M}_{\hat{m}}$ for $|\hat{m}|=3$. However, at very least it provides a lower bound on $\overline{M}_{\hat{m}}$, and an approximation of the optimal superprocess $\mathbf{Z}^{M}_{\hat{n}\hat{n}}$. What we see then, is that our approximation of $\mathbf{Z}^{M}_{\hat{n}\hat{n}}$ yields a lower $N$ value than standard DD $\mathbf{Z}^{\text{DD}}_{\hat{n}\hat{n}}$. This suggests that, $\mathbf{Z}^{M}_{\hat{n}\hat{n}}$ causes the non-Markovianity monotone $\overline{N}_{\hat{m}}$ to decrease. This is empirical evidence for the phenomenon that optimising $\overline{M}_{\hat{m}}$ sacrifices $\overline{N}_{\hat{m}}$.

\section{Discussion}
In Ref.~\cite{extractingdynamicalquantumresources} we framed DD in a resource theoretic manner and analysed the trade-off of total, Markovian, and non-Markovian correlations -- quantified by $I$, $M$, and  $N$ respectively -- during the decoupling process. As shown in Ref.~\cite{zambon2024processtensordistinguishabilitymeasures}, the quantities $I$, $M$, and $N$ are not monotones under the resource transformations we considered, putting the quantitative and qualitative conclusions of Ref.~\cite{extractingdynamicalquantumresources} in question. Here, building on the results of Ref.~\cite{zambon2024processtensordistinguishabilitymeasures}
we showed that the mathematical issues can be overcome by replacing $I$, $M$, and $N$ with $\overline{I}$, $\overline{M}$, and $\overline{N}$, which are indeed monotones under free transformations -- and thus valid quanitifiers of the expended resources in DD scenarios -- and satisfy subadditivity (instead of additivity, as was the case for $I$, $M$, and $N$). Additionally, as we showed, they can be directly interpreted as special case of the generalized comb divergences introduced in Ref.~\cite{zambon2024processtensordistinguishabilitymeasures}. Crucially, these changes do not alter the empirical results of Ref.~\cite{extractingdynamicalquantumresources}. Firstly, the conclusion that the DD protocols we propose outperform traditional ones remains entirely unaffected. Secondly, the analysis of he expenditure of non-Markovian correlations $N$ for overall correlations $I$, while not carried out with respect to the correct monotones $\overline{N}$ and $\overline{I}$ still provides a good proxy for the relationship of these quantities over the course of DD and thus correctly displays their qualitative behaviour.

\section{Acknowledgements}
We thank Guilherme Zambon for bringing the error in Ref.~\cite{extractingdynamicalquantumresources} to our attention, and for fruitful discussions on this topic.
SM acknowledges funding from the European Union's Horizon Europe research and innovation programme under the Marie Sk{\l}odowska-Curie grant agreement No.\ 101068332.

\bibliographystyle{apsrev4-1_custom}
\bibliography{refs.bib}

\begin{thebibliography}{10}%
\makeatletter
\providecommand \@ifxundefined [1]{%
 \@ifx{#1\undefined}
}%
\providecommand \@ifnum [1]{%
 \ifnum #1\expandafter \@firstoftwo
 \else \expandafter \@secondoftwo
 \fi
}%
\providecommand \@ifx [1]{%
 \ifx #1\expandafter \@firstoftwo
 \else \expandafter \@secondoftwo
 \fi
}%
\providecommand \natexlab [1]{#1}%
\providecommand \enquote  [1]{``#1''}%
\providecommand \bibnamefont  [1]{#1}%
\providecommand \bibfnamefont [1]{#1}%
\providecommand \citenamefont [1]{#1}%
\providecommand \href@noop [0]{\@secondoftwo}%
\providecommand \href [0]{\begingroup \@sanitize@url \@href}%
\providecommand \@href[1]{\@@startlink{#1}\@@href}%
\providecommand \@@href[1]{\endgroup#1\@@endlink}%
\providecommand \@sanitize@url [0]{\catcode `\\12\catcode `\$12\catcode `\&12\catcode `\#12\catcode `\^12\catcode `\_12\catcode `\%12\relax}%
\providecommand \@@startlink[1]{}%
\providecommand \@@endlink[0]{}%
\providecommand \url  [0]{\begingroup\@sanitize@url \@url }%
\providecommand \@url [1]{\endgroup\@href {#1}{\urlprefix }}%
\providecommand \urlprefix  [0]{URL }%
\providecommand \Eprint [0]{\href }%
\providecommand \doibase [0]{http://dx.doi.org/}%
\providecommand \selectlanguage [0]{\@gobble}%
\providecommand \bibinfo  [0]{\@secondoftwo}%
\providecommand \bibfield  [0]{\@secondoftwo}%
\providecommand \translation [1]{[#1]}%
\providecommand \BibitemOpen [0]{}%
\providecommand \bibitemStop [0]{}%
\providecommand \bibitemNoStop [0]{.\EOS\space}%
\providecommand \EOS [0]{\spacefactor3000\relax}%
\providecommand \BibitemShut  [1]{\csname bibitem#1\endcsname}%
\let\auto@bib@innerbib\@empty
\bibitem [{\citenamefont {Berk}\ \emph {et~al.}(2023)\citenamefont {Berk}, \citenamefont {Milz}, \citenamefont {Pollock},\ and\ \citenamefont {Modi}}]{extractingdynamicalquantumresources}%
  \BibitemOpen
  \bibfield  {author} {\bibinfo {author} {\bibfnamefont {G.~D.}\ \bibnamefont {Berk}}, \bibinfo {author} {\bibfnamefont {S.}~\bibnamefont {Milz}}, \bibinfo {author} {\bibfnamefont {F.~A.}\ \bibnamefont {Pollock}}, \ and\ \bibinfo {author} {\bibfnamefont {K.}~\bibnamefont {Modi}},\ }\bibfield  {title} {\enquote {\bibinfo {title} {Extracting quantum dynamical resources: consumption of non-markovianity for noise reduction},}\ }\href {\doibase 10.1038/s41534-023-00774-w} {\bibfield  {journal} {\bibinfo  {journal} {npj Quantum Information}\ }\textbf {\bibinfo {volume} {9}},\ \bibinfo {pages} {104} (\bibinfo {year} {2023})},\ \Eprint {http://arxiv.org/abs/2110.02613} {arXiv:2110.02613} \BibitemShut {NoStop}%
\bibitem [{\citenamefont {Viola}\ \emph {et~al.}(1999)\citenamefont {Viola}, \citenamefont {Knill},\ and\ \citenamefont {Lloyd}}]{dynamicaldecouplingofopenquantumsystems}%
  \BibitemOpen
  \bibfield  {author} {\bibinfo {author} {\bibfnamefont {L.}~\bibnamefont {Viola}}, \bibinfo {author} {\bibfnamefont {E.}~\bibnamefont {Knill}}, \ and\ \bibinfo {author} {\bibfnamefont {S.}~\bibnamefont {Lloyd}},\ }\bibfield  {title} {\enquote {\bibinfo {title} {Dynamical decoupling of open quantum systems},}\ }\href {\doibase 10.1103/PhysRevLett.82.2417} {\bibfield  {journal} {\bibinfo  {journal} {Phys. Rev. Lett.}\ }\textbf {\bibinfo {volume} {82}},\ \bibinfo {pages} {2417} (\bibinfo {year} {1999})},\ \Eprint {http://arxiv.org/abs/quant-ph/9809071} {arXiv:quant-ph/9809071} \BibitemShut {NoStop}%
\bibitem [{\citenamefont {Burgarth}\ \emph {et~al.}(2021)\citenamefont {Burgarth}, \citenamefont {Facchi}, \citenamefont {Fraas},\ and\ \citenamefont {Hillier}}]{nonmarkoviannoisethatcannotbe}%
  \BibitemOpen
  \bibfield  {author} {\bibinfo {author} {\bibfnamefont {D.}~\bibnamefont {Burgarth}}, \bibinfo {author} {\bibfnamefont {P.}~\bibnamefont {Facchi}}, \bibinfo {author} {\bibfnamefont {M.}~\bibnamefont {Fraas}}, \ and\ \bibinfo {author} {\bibfnamefont {R.}~\bibnamefont {Hillier}},\ }\bibfield  {title} {\enquote {\bibinfo {title} {{Non-Markovian noise that cannot be dynamically decoupled by periodic spin echo pulses}},}\ }\href {\doibase 10.21468/SciPostPhys.11.2.027} {\bibfield  {journal} {\bibinfo  {journal} {SciPost Phys.}\ }\textbf {\bibinfo {volume} {11}},\ \bibinfo {pages} {27} (\bibinfo {year} {2021})},\ \Eprint {http://arxiv.org/abs/1904.03627} {arXiv:1904.03627} \BibitemShut {NoStop}%
\bibitem [{\citenamefont {Addis}\ \emph {et~al.}(2015)\citenamefont {Addis}, \citenamefont {Ciccarello}, \citenamefont {Cascio}, \citenamefont {Palma},\ and\ \citenamefont {Maniscalco}}]{dynamicaldecouplingefficiencyversusquantumnonmarkovianity}%
  \BibitemOpen
  \bibfield  {author} {\bibinfo {author} {\bibfnamefont {C.}~\bibnamefont {Addis}}, \bibinfo {author} {\bibfnamefont {F.}~\bibnamefont {Ciccarello}}, \bibinfo {author} {\bibfnamefont {M.}~\bibnamefont {Cascio}}, \bibinfo {author} {\bibfnamefont {G.~M.}\ \bibnamefont {Palma}}, \ and\ \bibinfo {author} {\bibfnamefont {S.}~\bibnamefont {Maniscalco}},\ }\bibfield  {title} {\enquote {\bibinfo {title} {Dynamical decoupling efficiency versus quantum non-markovianity},}\ }\href {\doibase 10.1088/1367-2630/17/12/123004} {\bibfield  {journal} {\bibinfo  {journal} {New J. Phys}\ }\textbf {\bibinfo {volume} {17}},\ \bibinfo {pages} {123004} (\bibinfo {year} {2015})},\ \Eprint {http://arxiv.org/abs/1502.02528} {arXiv:1502.02528} \BibitemShut {NoStop}%
\bibitem [{\citenamefont {Chitambar}\ and\ \citenamefont {Gour}(2019)}]{review}%
  \BibitemOpen
  \bibfield  {author} {\bibinfo {author} {\bibfnamefont {E.}~\bibnamefont {Chitambar}}\ and\ \bibinfo {author} {\bibfnamefont {G.}~\bibnamefont {Gour}},\ }\bibfield  {title} {\enquote {\bibinfo {title} {Quantum resource theories},}\ }\href {\doibase 10.1103/RevModPhys.91.025001} {\bibfield  {journal} {\bibinfo  {journal} {Rev. Mod. Phys.}\ }\textbf {\bibinfo {volume} {91}},\ \bibinfo {pages} {025001} (\bibinfo {year} {2019})},\ \Eprint {http://arxiv.org/abs/1806.06107} {arXiv:1806.06107} \BibitemShut {NoStop}%
\bibitem [{\citenamefont {Berk}\ \emph {et~al.}(2021)\citenamefont {Berk}, \citenamefont {Garner}, \citenamefont {Yadin}, \citenamefont {Modi},\ and\ \citenamefont {Pollock}}]{resourcetheoriesofmultitime}%
  \BibitemOpen
  \bibfield  {author} {\bibinfo {author} {\bibfnamefont {G.~D.}\ \bibnamefont {Berk}}, \bibinfo {author} {\bibfnamefont {A.~J.~P.}\ \bibnamefont {Garner}}, \bibinfo {author} {\bibfnamefont {B.}~\bibnamefont {Yadin}}, \bibinfo {author} {\bibfnamefont {K.}~\bibnamefont {Modi}}, \ and\ \bibinfo {author} {\bibfnamefont {F.~A.}\ \bibnamefont {Pollock}},\ }\bibfield  {title} {\enquote {\bibinfo {title} {Resource theories of multi-time processes: {A} window into quantum non-{M}arkovianity},}\ }\href {\doibase 10.22331/q-2021-04-20-435} {\bibfield  {journal} {\bibinfo  {journal} {{Quantum}}\ }\textbf {\bibinfo {volume} {5}},\ \bibinfo {pages} {435} (\bibinfo {year} {2021})},\ \Eprint {http://arxiv.org/abs/1907.07003} {arXiv:1907.07003} \BibitemShut {NoStop}%
\bibitem [{\citenamefont {Pollock}\ \emph {et~al.}(2018)\citenamefont {Pollock}, \citenamefont {Rodr{\'{\i}}guez-Rosario}, \citenamefont {Frauenheim}, \citenamefont {Paternostro},\ and\ \citenamefont {Modi}}]{nonmarkovianquantumprocesses}%
  \BibitemOpen
  \bibfield  {author} {\bibinfo {author} {\bibfnamefont {F.~A.}\ \bibnamefont {Pollock}}, \bibinfo {author} {\bibfnamefont {C.}~\bibnamefont {Rodr{\'{\i}}guez-Rosario}}, \bibinfo {author} {\bibfnamefont {T.}~\bibnamefont {Frauenheim}}, \bibinfo {author} {\bibfnamefont {M.}~\bibnamefont {Paternostro}}, \ and\ \bibinfo {author} {\bibfnamefont {K.}~\bibnamefont {Modi}},\ }\bibfield  {title} {\enquote {\bibinfo {title} {Non-{M}arkovian quantum processes: complete framework and efficient characterization},}\ }\href {\doibase 10.1103/PhysRevA.97.012127} {\bibfield  {journal} {\bibinfo  {journal} {Phys. Rev. A}\ }\textbf {\bibinfo {volume} {97}},\ \bibinfo {pages} {012127} (\bibinfo {year} {2018})},\ \Eprint {http://arxiv.org/abs/1512.00589} {arXiv:1512.00589} \BibitemShut {NoStop}%
\bibitem [{\citenamefont {Zambon}(2024)}]{zambon2024processtensordistinguishabilitymeasures}%
  \BibitemOpen
  \bibfield  {author} {\bibinfo {author} {\bibfnamefont {G.}~\bibnamefont {Zambon}},\ }\bibfield  {title} {\enquote {\bibinfo {title} {Process tensor distinguishability measures},}\ }\href {\doibase 10.1103/PhysRevA.110.042210} {\bibfield  {journal} {\bibinfo  {journal} {Phys. Rev. A}\ ,\ \bibinfo {pages} {042210}} (\bibinfo {year} {2024})},\ \Eprint {http://arxiv.org/abs/2407.15712} {arXiv:2407.15712} \BibitemShut {NoStop}%
\bibitem [{\citenamefont {Milz}\ and\ \citenamefont {Modi}(2021)}]{quantumstochasticprocesses}%
  \BibitemOpen
  \bibfield  {author} {\bibinfo {author} {\bibfnamefont {S.}~\bibnamefont {Milz}}\ and\ \bibinfo {author} {\bibfnamefont {K.}~\bibnamefont {Modi}},\ }\bibfield  {title} {\enquote {\bibinfo {title} {Quantum stochastic processes and quantum non-markovian phenomena},}\ }\href {\doibase 10.1103/PRXQuantum.2.030201} {\bibfield  {journal} {\bibinfo  {journal} {PRX Quantum}\ }\textbf {\bibinfo {volume} {2}},\ \bibinfo {pages} {030201} (\bibinfo {year} {2021})},\ \Eprint {http://arxiv.org/abs/2012.01894} {arXiv:2012.01894} \BibitemShut {NoStop}%
\bibitem [{\citenamefont {Zambon}\ and\ \citenamefont {Soares-Pinto}(2024)}]{Zambon_2024}%
  \BibitemOpen
  \bibfield  {author} {\bibinfo {author} {\bibfnamefont {G.}~\bibnamefont {Zambon}}\ and\ \bibinfo {author} {\bibfnamefont {D.~O.}\ \bibnamefont {Soares-Pinto}},\ }\bibfield  {title} {\enquote {\bibinfo {title} {Relations between markovian and non-markovian correlations in multitime quantum processes},}\ }\href {\doibase 10.1103/physreva.109.062401} {\bibfield  {journal} {\bibinfo  {journal} {Physical Review A}\ }\textbf {\bibinfo {volume} {109}},\ \bibinfo {pages} {062401} (\bibinfo {year} {2024})},\ \Eprint {http://arxiv.org/abs/2312.10147\!\!} {arXiv:2312.10147\!\!} \BibitemShut {NoStop}%
\end{thebibliography}%

\end{document}